\documentclass[12pt]{article}

\usepackage{indentfirst}
\usepackage{amsmath}
\usepackage{amsfonts}
\usepackage{mathrsfs}
\usepackage{amsthm}
\usepackage{nicefrac}
\usepackage[shortlabels]{enumitem}
\usepackage{graphicx}
\usepackage[labelformat=simple]{subcaption}

\newtheorem{theorem}{Theorem}
\newtheorem{conjecture}{Conjecture}
\theoremstyle{definition}
\newtheorem*{definition}{Definition}
\newtheorem{example}{Example}
\newtheorem*{remark}{Remark}

\DeclareMathOperator{\End}{End}

\begin{document}

\title{The spectrum of the Berezin transform for Gelfand pairs}

\author{\textsc Dor Shmoish$^{1}$}

\footnotetext[1]{Partially supported by the European Research Council Advanced grant 338809}

\date{\today}

\maketitle

\begin{abstract}
We discuss the Berezin transform, a Markov operator associated to positive-operator valued measures (POVMs). We consider the class of so-called orbit POVMs, constructed on the quotient space $\Omega = G/K$ of a compact group $G$ by its subgroup $K$. We restrict attention to the case where $(G, K)$ is a Gelfand pair and derive an explicit formula for the spectrum of the Berezin transform in terms of the characters of the irreducible unitary representations of $G$. We then specialize our results to the case study $G = \text{SU}(2)$ and $K \simeq S^1$, and find the spectra of orbit POVMs on $S^2$. We confirm previous calculations by Zhang and Donaldson of the spectrum of the standard quantization of $S^2$ coming from Kähler geometry. Then, we make a couple of conjectures about the oscillations in the sequence of eigenvalues, and prove them in the simplest case of second-highest weight vector. Finally, for low weights, we prove that the corresponding orbit POVMs on $S^2$ violate the axioms of a Berezin-Toeplitz quantization.
\end{abstract}

\tableofcontents

\section{Introduction and Main Results} \label{sec:1}
The main subject of the present paper is mathematical quantization. In classical mechanics, the phase space is a symplectic manifold $M$ and observables are modeled by the space $C^\infty(M)$ of smooth functions on $M$, whereas in quantum mechanics, the phase space is a complex Hilbert space $\mathcal{H}$ and observables are modeled by the space $\mathscr{L}(\mathcal{H})$ of Hermitian operators on $\mathcal{H}$. Quantum states are provided by so-called density operators, which are positive trace-one operators forming a subset $\mathcal{S}(\mathcal{H}) \subset \mathscr{L}(\mathcal{H})$. Quantization is the procedure of constructing a quantum system starting from the classical mechanics of a system, in such a way that the classical system is the limit as $\hbar \rightarrow 0$ of the quantum system. Here $\hbar$ is Planck's constant, which in this setting is just a parameter of the construction. Since the goal of quantization is to find a quantum system that is analogous in some sense to a given classical system, there is no unique approach to it.\par

Here we focus on the Berezin-Toeplitz quantization procedure, introduced for the first time by Berezin in \cite{Berezin}. In fact, we restrict our attention to the Berezin-Toeplitz quantization of closed Kähler manifolds \cite{Berezin, Bordemann, Schlichenmaier, LeFloch}. Such a quantization is defined by a sequence of positive surjective linear maps $T_\hbar: C^\infty(M) \rightarrow \mathscr{L}(\mathcal{H}_\hbar)$  with $T_\hbar(1) = \text{Id}$. The sequence is parametrized by $\hbar \in \Lambda$ for some subset $\Lambda \subset \mathbb{R}$ having $0$ as a limit point. That $T_\hbar$ is positive means that for any $f \ge 0$ we have $T_\hbar(f) \ge 0$. These maps have to satisfy the following properties:
\begin{enumerate}[label=(\arabic*), topsep=3pt]
    \item \textbf{(norm correspondence)} $\|f\|_\infty - O(\hbar) \le \| T_\hbar(f) \|_\text{op} \le \|f\|_\infty$
    \item \textbf{(bracket correspondence)} $\left| -\frac{i}{\hbar} [T_\hbar(f), T_\hbar(g)] - T_\hbar(\{f, g\}) \right|_\text{op} = O(\hbar)$
    \item \textbf{(quasi-multiplicativity)} $\| T_\hbar(fg) - T_\hbar(f) T_\hbar(g) \|_\text{op} = O(\hbar)$
    \item \textbf{(trace correspondence)} $\left| \text{tr}\,\big(T_\hbar(f)\big) - \frac{1}{(2\pi\hbar)^n} \int_M f\,d\text{Vol} \right| = O\big(\frac{1}{\hbar^{n-1}}\big)$
\end{enumerate}
for all $f, g \in C^\infty(M)$ and all $\hbar \in \Lambda$, where $n = \dim_\mathbb{C} M$.\par

Every positive linear map $T: L_2(M) \rightarrow \mathscr{L}(\mathcal{H})$ satisfying $T(1) = \text{Id}$ is given by integration with respect to some POVM on $M$. A POVM (positive-operator valued measure) on $M$ is, roughly speaking, a mapping associating a positive-definite operator to any measurable subset of $M$ in a $\sigma$-additive manner. Formally, if $M$ is equipped with the $\sigma$-algebra $\mathcal{C}$, then an $\mathscr{L}(\mathcal{H})$-valued POVM on $(M, \mathcal{C})$ is a mapping $W: \mathcal{C} \rightarrow \mathscr{L}(\mathcal{H})$ that takes each subset $X \in \mathcal{C}$ to a positive operator $W(X) \in \mathscr{L}(\mathcal{H})$ in a countably additive manner, normalized by $W(M) = \text{Id}$. Given a positive linear map $T: L_2(M) \rightarrow \mathscr{L}(\mathcal{H})$ satisfying $T(1) = \text{Id}$, define $W: \mathcal{C} \rightarrow \mathscr{L}(\mathcal{H})$ by the equality $W(X) = T(1_X)$ for every $X \in \mathcal{C}$. Then we indeed obtain an $\mathscr{L}(\mathcal{H})$-valued POVM on $(M, \mathcal{C})$, and we clearly have $T(\phi) = \int_M \phi \, dW$ for every $\phi \in L_2(M)$. This is because we have equality whenever $\phi$ is an indicator function by definition, and both sides of the equality are countably additive.\par

It is known \cite{CDS} that an $\mathscr{L}(\mathcal{H})$-valued POVM $W$ on $(M, \mathcal{C})$ has a density with respect to some probability measure $\alpha$ on $(M, \mathcal{C})$, i.e. has the form
\begin{equation*}
dW(x) = n \, F(x) \, d\alpha(x),
\end{equation*}
where $n = \dim \mathcal{H}$ and $F: M \rightarrow \mathcal{S}(\mathcal{H})$ is a measurable function.\par

Thus every quantization map $T_\hbar: C^\infty(M) \rightarrow \mathscr{L}(\mathcal{H}_\hbar)$ extended to $L_2(M)$ by continuity is given by integration with respect to an $\mathscr{L}(\mathcal{H}_\hbar)$-valued POVM $W_\hbar$, which has the form $dW_\hbar(x) = n_\hbar \, F_\hbar(x) \, d\alpha_\hbar(x)$. Incidentally, integrating with respect to the measure $\alpha_\hbar$ instead of the standard volume form, the trace correspondence principle can be stated as a precise equality:
\begin{enumerate}[label=(\arabic*'), topsep=3pt]
  \setcounter{enumi}{3}
  \item \textbf{(trace correspondence)} $\text{tr}\,\big(T_\hbar(f)\big) = n_\hbar \int_M f\,d\alpha_\hbar$.
\end{enumerate}\par

Given a quantization scheme, we may consider the following operation. For a function $f$ on the classical phase space $M$, let us first quantize it and then dequantize. Quantization is performed by applying the map $T_\hbar$, while dequantization is performed by applying the dual map $T_\hbar^*$. We again obtain a function on the phase space $M$, which is a blurring of the original function $f$. This operation on functions, $f \mapsto \mathcal{B}_\hbar f$, is called the Berezin transform. Formally, the Berezin transform is defined by the equation $\mathcal{B}_\hbar := \frac{1}{n_\hbar} T_\hbar^* T_\hbar$, where $n_\hbar = \dim \mathcal{H}_\hbar$.\par

One can generalize the definition of the Berezin transform and define it given any POVM on $M$. For an $\mathscr{L}(\mathcal{H})$-valued POVM $W$ on $M$, which has the form $dW(x) = n \, F(x) \, d\alpha(x)$, the corresponding quantization map $T: L_2(M) \rightarrow \mathscr{L}(\mathcal{H})$ is given by
\begin{equation*}
T(\phi) = \int_M \phi \, dW = n \int_M \phi(x) F(x) \, d\alpha(x).
\end{equation*}
The dequantization map $T^*$ is the dual mapping of $T$ with respect to the inner products $\left< \phi, \psi \right> = \int_M \phi \, \overline{\psi} \, d\alpha$ on $C^\infty(M)$ and $\left< A, B \right> = \text{tr}(AB)$ on $\mathscr{L}(\mathcal{H})$. This map $T^*: \mathscr{L}(\mathcal{H}) \rightarrow L_2(M)$ is given by
\begin{equation*}
T^*(A)(x) = n \: \text{tr}(F(x)A).
\end{equation*}
Finally, the Berezin transform is defined by $\mathcal{B} := \frac{1}{n} T^* T$.\par

The Berezin transform naturally arises in two different settings: in the context of the Berezin-Toeplitz quantization of closed Kähler manifolds, and when considering certain POVMs associated to irreducible representations of finite or compact groups \cite{Kaminker}.\par

In the framework of the Berezin-Toeplitz quantization of closed Kähler manifolds, $\mathcal{B}_\hbar$ is known to be a Markov operator with finite-dimensional image. We focus on the spectral properties of $\mathcal{B}_\hbar$. For fixed $\hbar$, this operator factors through a finite-dimensional space and hence its spectrum consists of a finite collection of points lying in the interval $[0,1]$. Moreover, multiplicities of positive eigenvalues are finite, and $1$ is the maximal eigenvalue corresponding to the constant function. Write its spectrum (with multiplicities) in the form
\begin{equation*}
1 = \lambda_0 \ge \lambda_1 \ge \hdots \ge \lambda_k \ge \hdots \ge 0.
\end{equation*}
The quantity $\gamma := 1 - \lambda_1$ is called the spectral gap, a fundamental characteristic of a Markov chain responsible for the rate of convergence to the stationary distribution.\par

In addition to quantization, POVMs appear in quantum mechanics in another setting: they model quantum measurements \cite{Busch}. Interestingly enough, within this model the spectral gap of the Berezin transform corresponding to a POVM admits two different interpretations: it measures the minimal magnitude of quantum noise production, and it equals the spectral gap of the Markov chain corresponding to repeated quantum measurements.\par

In the present paper, we study the spectral properties of the Berezin transform of a certain class of so-called orbit POVMs, whose construction was briefly described in the Preliminaries and Remark 6.7 of \cite{IKPS}. It is a representation-theoretic construction following ideas first introduced by Perelomov in \cite[p.~223]{Perelomov}.\par

Given a compact group $G$ with normalized Haar measure $\mu$, fix an irreducible unitary representation $(\rho, V)$ of $G$ and a vector $v \in V$. Consider the subgroup
\begin{equation*}
K = \left\{ g \in G \mid \rho(g)v = e^{i\theta}v \text{ for some } \theta \in [0,2\pi) \right\}
\end{equation*}
of elements whose action on $v$ merely changes its phase. Thinking of vectors in $V$ as pure quantum states, vectors that differ only in phase correspond to the same state, and thus $K$ can be thought of as the stabilizer of $v$.\\
Now consider the orbit space $\Omega = G/K$ equipped with the pushforward measure $\omega(X) = \mu(XK)$, and define an $\mathscr{L}(V)$-valued POVM $W$ on $\Omega$ by
\begin{equation*}
dW(x) = n \, P_{\rho(\tilde{x})v} \: d\omega(x),
\end{equation*}
where $n = \dim V$, $\tilde{x}$ is any lifting of $x$ to $G$ and $P_{\rho(\tilde{x})v}$ is the orthogonal projection onto the vector $\rho(\tilde{x})v$. Note that $W$ is well-defined: if $g_1 K = g_2 K$, then there is some $k \in K$ with $g_2 = g_1 k$, and then
\begin{equation*}
P_{\rho(g_2) v} = P_{\rho(g_1) \rho(k) v} = P_{\rho(g_1) e^{i\theta} v} = P_{e^{i\theta} \rho(g_1) v} = P_{\rho(g_1) v}
\end{equation*}
(elements in the same coset differ in their action on $v$ only by a phase factor, which doesn't affect the projection operator).\\
We shall refer to a POVM obtained by this construction as an orbit POVM.\par

$W$ gives rise to a quantization map $T: L_2(\Omega) \rightarrow \mathscr{L}(V)$ given by
\begin{equation*}
T(f) = \int_\Omega f \, dW = n \int_\Omega f(x) P_{\rho(\tilde{x})v} \, d\omega(x),
\end{equation*}
with the dual map $T^*: \mathscr{L}(V) \rightarrow L_2(\Omega)$ given by
\begin{equation*}
T^*(A)(x) = n \: \text{tr} \left( P_{\rho(\tilde{x})v} A \right).
\end{equation*}
Thus we have the associated Berezin transform $\mathcal{B}: L_2(\Omega, \omega) \rightarrow L_2(\Omega, \omega)$,
\begin{equation*}
\mathcal{B} = \frac{1}{n} T^* T,
\end{equation*}
and our goal in the present work is to study its spectrum.\par

Let us recall some definitions first.

\begin{definition}[Fourier transform]
Let $G$ be a compact group with normalized Haar measure $\mu$ and denote by $\widehat{G}$ the set of equivalence classes of unitary irreducible representations of $G$. Let $f \in L_2(G, \mu)$ be any square-integrable complex-valued function. Then the Fourier transform of $f$ by an irreducible representation $\varphi \in \widehat{G}$ is the operator $\widehat{f}(\varphi) \in \End V_\varphi$ defined by
\begin{equation*}
\widehat{f}(\varphi) = \int\limits_G f(x) \varphi(x^{-1}) d\mu(x).
\end{equation*}
\end{definition}

\begin{definition}[Gelfand pair]
Let $G$ be a compact group and let $K$ be a subgroup. Let $C(G)$ be the convolution algebra of continuous complex-valued functions on $G$, and let $C^\#_K(G)$ be the subalgebra of bi-$K$-invariant functions, i.e. functions $f \in C(G)$ satisfying $f(k_1 g k_2) = f(g)$ for all $k_1, k_2 \in K$ and $g \in G$. The pair $(G, K)$ is said to be a Gelfand pair if the convolution algebra $C^\#_K(G)$ is commutative (cf. Definition 6.1.1 in \cite{vanDijk}).
\end{definition}

\begin{example}
Let $G$ be a compact abelian group, and let $K = \{e\}$. Then $(G, K)$ is a Gelfand pair.
\end{example}

\begin{example}
Let $G = \text{SO}(n)$. The group $G$ acts transitively on the sphere $S^{n-1} \subset \mathbb{R}^n$. Let $K$ be the stabilizer of the point $e_1 = (1, 0, \hdots, 0)$, so that $K \simeq \text{SO}(n-1)$. Then $(G, K)$ is a Gelfand pair \cite[p.~95]{vanDijk}.
\end{example}

Let us now introduce some common notation.\\
For a representation $\varphi \in \widehat{G}$ denote by $\chi_\varphi(x) = \text{tr}\left(\varphi(x)\right)$ the character of $\varphi$.\\
For $f, g \in L_2(G, \mu)$ denote by $\left< f, g \right> = \int_G f(x) \, \overline{g(x)} \, d\mu(x)$ their standard inner product on $L_2(G, \mu)$.\par
Our main result can be described as follows. Let $W$ be an orbit POVM on $\Omega = G/K$, constructed by fixing a representation $\rho \in \widehat{G}$ and a vector $v \in V_\rho$. Define the function $u: G \rightarrow \mathbb{R}$ by
\begin{equation*}
u(g) = n \left| \left< \rho(g)v, v \right> \right|^2.
\end{equation*}
Assuming that $(G, K)$ is a Gelfand pair, the spectrum of the associated Berezin transform $\mathcal{B}$ consists of the values of the coefficients $\left< u, \chi_\varphi \right>$ in the expansion of the class function $u$ by the basis $(\chi_\varphi)_{\varphi \in \widehat{G}}$ of characters of the irreducible unitary representations of $G$:

\begin{theorem} \label{thm:1}
Let $W$ be an orbit POVM on $\Omega = G/K$ defined by the equality $dW(x) = n \, P_{\rho(\tilde{x})v} \, d\omega$, and assume that $(G, K)$ is a Gelfand pair. Let $\mathcal{B}$ denote the associated Berezin transform and define the function $u:G \rightarrow \mathbb{R}$ by $u(g) = n \left| \left< \rho(g)v, v \right> \right|^2$. Then
\begin{equation*}
\mathrm{Spec}(\mathcal{B}) = \left\{ \left< u, \chi_\varphi \right> \mathrel{\big|} \varphi \in \widehat{G} \right\} \cup \{0\},
\end{equation*}
where each $\left< u, \chi_\varphi \right>$ has multiplicity $\dim \varphi$ (and $0$ has infinite multiplicity).
\end{theorem}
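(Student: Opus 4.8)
The plan is to realize $\mathcal{B}$ as a left-$G$-equivariant convolution operator on $L_2(G/K)$, diagonalize it via the Peter--Weyl theorem, and use the Gelfand hypothesis precisely to force each isotypic block to collapse to a single scalar. First I would compute $\mathcal{B}=\frac1n T^*T$ explicitly: using $\mathrm{tr}(P_aP_b)=|\langle a,b\rangle|^2$ for unit vectors together with the unitarity of $\rho$, a direct calculation gives (for $\|v\|=1$)
\[
\mathcal{B}f(x) = \int_\Omega f(y)\,u(\tilde y^{-1}\tilde x)\,d\omega(y).
\]
Identifying $L_2(\Omega,\omega)$ with the right-$K$-invariant functions on $G$ (so that $\int_\Omega f\,d\omega=\int_G \tilde f\,d\mu$), this is exactly the convolution $\mathcal{B}\tilde f = \tilde f * u$. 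I would then record two elementary facts. Being of the form $\frac1n T^*T$, the operator $\mathcal{B}$ is positive and self-adjoint, so its spectrum is real and nonnegative and its eigenspaces are orthogonal. And $u$ lies in $C_K^\#(G)$: it is real-valued and bi-$K$-invariant. The bi-invariance is where the twisted nature of $K$ enters, since $\rho(k)v=e^{i\theta}v$ means left and right translation by $K$ multiply $\langle\rho(g)v,v\rangle$ by unit phases, which disappear upon taking $|\cdot|^2$.

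Next I would observe that convolution by a fixed kernel commutes with the left regular representation, so $\mathcal{B}$ is a left-$G$-intertwiner of $L_2(G/K)$. By Peter--Weyl and Frobenius reciprocity, $L_2(G/K)=\bigoplus_{\varphi\in\widehat G} V_\varphi\otimes (V_\varphi^*)^K$, with the multiplicity of $V_\varphi$ equal to $m_\varphi=\dim V_\varphi^K$. Here I invoke the standard characterization of Gelfand pairs: $(G,K)$ is a Gelfand pair if and only if $m_\varphi\le 1$ for every $\varphi$ (this is the representation-theoretic face of the commutativity of $C_K^\#(G)$). Consequently each $V_\varphi$ occurs at most once, and Schur's lemma forces $\mathcal{B}$ to act as a single scalar $\lambda_\varphi$ on each occurring $V_\varphi$, with eigenvalue multiplicity $\dim\bigl(V_\varphi\otimes(V_\varphi^*)^K\bigr)=\dim\varphi$.

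To pin down $\lambda_\varphi$ I would pass to the Fourier side. The convolution theorem gives $\widehat{\mathcal{B}\tilde f}(\varphi)=\widehat u(\varphi)\,\widehat{\tilde f}(\varphi)$. Bi-$K$-invariance of $u$ translates, after a change of variables, into $\widehat u(\varphi)=Q\,\widehat u(\varphi)\,Q$, where $Q=\int_K\varphi(k)\,d\mu_K(k)$ is the orthogonal projection onto $V_\varphi^K$; thus $\widehat u(\varphi)$ is supported on $V_\varphi^K$ in both domain and range. Similarly, right-$K$-invariance of $\tilde f$ forces $\mathrm{range}\,\widehat{\tilde f}(\varphi)\subseteq V_\varphi^K$. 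Under the Gelfand hypothesis $\dim V_\varphi^K\le 1$, so $\widehat u(\varphi)$ acts on the line $V_\varphi^K$ as multiplication by a scalar $\lambda_\varphi$, whence $\widehat u(\varphi)\,\widehat{\tilde f}(\varphi)=\lambda_\varphi\,\widehat{\tilde f}(\varphi)$; by Fourier inversion this re-confirms that $\mathcal{B}$ equals $\lambda_\varphi$ on the $\varphi$-block. Finally, taking the trace and using $\chi_\varphi(x^{-1})=\overline{\chi_\varphi(x)}$,
\[
\lambda_\varphi=\mathrm{tr}\,\widehat u(\varphi)=\int_G u(x)\,\chi_\varphi(x^{-1})\,d\mu(x)=\langle u,\chi_\varphi\rangle,
\]
the first equality holding because $\widehat u(\varphi)$ has rank at most one with sole eigenvalue $\lambda_\varphi$.

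It remains to treat the $\varphi$ with $V_\varphi^K=0$: there $\widehat u(\varphi)=Q\widehat u(\varphi)Q=0$, so $\langle u,\chi_\varphi\rangle=\mathrm{tr}\,\widehat u(\varphi)=0$ and these representations contribute only to the infinite-dimensional kernel. Assembling the pieces yields $\mathrm{Spec}(\mathcal{B})=\{\langle u,\chi_\varphi\rangle:\varphi\in\widehat G\}\cup\{0\}$ with the stated multiplicities. I expect the main obstacle to be conceptual rather than computational: one must resist treating $\widehat u(\varphi)$ as a scalar on all of $V_\varphi$ (which would wrongly produce $\langle u,\chi_\varphi\rangle/\dim\varphi$) and instead see that it is rank one, concentrated on $V_\varphi^K$. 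This collapse of each block to a single scalar is exactly the content of the Gelfand hypothesis and is the crux of the argument; by contrast, the bookkeeping with the phase of $K$ acting on $v$ is a secondary subtlety that must be tracked carefully but leaves $u$ itself genuinely bi-$K$-invariant.
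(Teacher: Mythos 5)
Your proof is correct, and although it shares its backbone with the paper's --- both realize $\mathcal{B}$ as convolution by the bi-$K$-invariant function $u$ and then diagonalize via Peter--Weyl --- the way the Gelfand hypothesis is exploited is genuinely different. The paper first proves, with no Gelfand assumption at all, the general formula (\ref{eq:7}), $\mathrm{Spec}^*(\mathcal{B})=\bigsqcup_{\varphi\in\widehat G}\bigsqcup_{j=1}^{\dim\varphi}\mathrm{Spec}^*\big(\widehat u(\varphi)\big)$, via the eigenfunction correspondence $F*u=\lambda F$ and a column decomposition of $\End V_\varphi$; the Gelfand hypothesis then enters only through Theorem 9 of \cite{Diaconis}, quoted as a black box, which puts $\widehat u(\varphi)$ in the form $\mathrm{diag}\big(\mathrm{tr}\,\widehat u(\varphi),0,\dots,0\big)$. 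You instead bring in the Gelfand hypothesis through the multiplicity-free characterization $\dim V_\varphi^K\le 1$, apply Schur's lemma to the intertwiner $\mathcal{B}$ acting on $L_2(G/K)=\bigoplus_\varphi V_\varphi\otimes(V_\varphi^*)^K$, and then prove the needed rank-one statement yourself via the identity $\widehat u(\varphi)=Q\,\widehat u(\varphi)\,Q$ with $Q=\int_K\varphi(k)\,dk$; in effect you inline a proof of the very lemma the paper cites, at the cost of invoking instead the equivalence between commutativity of $C^{\#}_K(G)$ and multiplicity-freeness, a result of comparable depth. Your route is more self-contained on that side and yields the extra information that the $\langle u,\chi_\varphi\rangle$-eigenspace is precisely the $\varphi$-isotypic component of $L_2(G/K)$ --- a fact the paper only establishes later, by a separate Schur-lemma argument, in the proof of Theorem \ref{thm:4}; the paper's route yields the intermediate formula (\ref{eq:7}), valid for arbitrary orbit POVMs without any Gelfand assumption. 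One point you assert without justification (the paper is equally brief here) is the infinite multiplicity of $0$: it follows by noting that $\widehat u(\varphi)\neq 0$ only for the finitely many $\varphi$ occurring in $\rho\otimes\overline{\rho}$ (by Schur orthogonality, since $u$ is a matrix coefficient of that representation), while infinitely many $\varphi$ occur in $L_2(G/K)$.
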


To prove this result, we begin with a study of the spectrum of the Berezin transform of general orbit POVMs (without the assumption that $(G, K)$ is a Gelfand pair). We first discover that the Berezin transform of an orbit POVM is a convolution operator which acts on functions by convolution with the function $u$ defined above. Then, via harmonic analysis and some linear algebra, we obtain the expression
\begin{equation*}
\text{Spec}^*(\mathcal{B}) = \bigsqcup_{\varphi \in \widehat{G}} \left(\, \bigsqcup_{j=1}^{\dim \varphi} \text{Spec}^*\big(\widehat{u}(\varphi)\big) \right),
\end{equation*}
where $\text{Spec}^*$ is the multiset of all eigenvalues repeated according to their multiplicity. We then restrict attention to the case where $(G, K)$ is a Gelfand pair, since in this case there is a particularly simple expression for $\widehat{u}(\varphi)$. This expression allows us to easily derive the result of Theorem \ref{thm:1}.\par

We then focus on the case $G = \text{SU(2)}$ and
\begin{equation*}
K = \left\{ k_t := \begin{pmatrix} e^{it} & 0\\0 & e^{-it} \end{pmatrix} \mathrel{\Big|} t \in [0, 2\pi) \right\} \simeq S^1,
\end{equation*}
and hence consider orbit POVMs on the phase space $G/K \simeq S^2$. Recall that the irreducible unitary representations of $\text{SU}(2)$ are given by $\left\{ \rho_j \mid j \in \frac{1}{2} \mathbb{N} \right\}$ where $\rho_j$ is a representation of dimension $n_j := 2j+1$ (Theorem 5.6.3 in \cite{Kowalski}), whose space $V_j$ has an orthonormal basis $w_j^{(j)}, w_j^{(j-1)}, ..., w_j^{(-j)}$ consisting of eigenvectors with respect to $K$, $\rho_j(k_t) w_j^{(m)} = e^{2imt} w_j^{(m)}$. The parameter $m$ in $w_j^{(m)}$ is called the weight of the vector. We fix $j \in \frac{1}{2} \mathbb{N}$ and take $\rho = \rho_j$ and $v = w_j^{(m)}$. We thus have the POVM $dW_{j, m} = {n_j \, P_{j, m} \, d\omega}$ on $S^2$, where $P_{j, m}([g]) = P_{\rho_j(g)w_j^{(m)}}$ is the orthogonal projection onto the vector $\rho_j(g)w_j^{(m)}$, with associated Berezin transform $\mathcal{B}_{j, m}$ and the corresponding function
\begin{equation*}
u_{j, m}(g) = n_j \left|\left< \rho_j(g) w_j^{(m)}, w_j^{(m)} \right>\right|^2.
\end{equation*}
We begin by noting that in this case $(G, K)$ is a Gelfand pair, and hence Theorem \ref{thm:1} applies. We proceed with an explicit calculation of the values $\left< u_{j, m}, \chi_\varphi \right>$ using tools from representation theory and well-known formulas for the Clebsch-Gordan coefficients to obtain the spectrum of $\mathcal{B}_{j, m}$ explicitly:

\begin{theorem} \label{thm:2}
The spectrum of the Berezin transform $\mathcal{B}_{j,m}$ is given by
\begin{equation*}
\mathrm{Spec}(\mathcal{B}_{j,m}) = \{ \lambda^{(0)}, \lambda^{(1)}, \hdots, \lambda^{(2j)}, 0\},
\end{equation*}
where $\lambda^{(J)} := \frac{(2j)!(2j+1)!}{(2j-J)!(2j+J+1)!} \left( \sum\limits_{z=0}^{j-m} (-1)^z \frac{\binom{2j-J}{z}\binom{J}{j-m-z}^2}{\binom{2j}{j-m}} \right)^2$ has multiplicity \break $2J+1$ (and $0$ has infinite multiplicity).
\end{theorem}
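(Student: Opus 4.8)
The plan is to push the Gelfand-pair structure of $(\mathrm{SU}(2), S^1)$ through Theorem~\ref{thm:1} and then evaluate the resulting character integrals explicitly. Since $(\mathrm{SU}(2), S^1)$ is a Gelfand pair, Theorem~\ref{thm:1} gives
\begin{equation*}
\mathrm{Spec}(\mathcal{B}_{j,m}) = \left\{ \langle u_{j,m}, \chi_J\rangle \mathrel{\big|} J \in \tfrac12\mathbb{N} \right\} \cup \{0\},
\end{equation*}
with $\langle u_{j,m}, \chi_J\rangle$ of multiplicity $n_J = 2J+1$. Thus the whole problem reduces to computing the coefficients $\langle u_{j,m}, \chi_J\rangle$ and showing that they vanish for all but finitely many $J$.

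First I would express $u_{j,m}$ through matrix coefficients. Writing $D^J_{ab}(g) = \langle \rho_J(g) w_J^{(b)}, w_J^{(a)}\rangle$ for the entries of $\rho_J$ in the weight basis, we have $u_{j,m} = n_j\, D^j_{mm}\,\overline{D^j_{mm}}$; moreover the characters of $\mathrm{SU}(2)$ are real, so $\overline{\chi_J} = \chi_J = \sum_p D^J_{pp}$ and hence
\begin{equation*}
\langle u_{j,m}, \chi_J\rangle = n_j \sum_p \int_{\mathrm{SU}(2)} D^j_{mm}(g)\, D^J_{pp}(g)\, \overline{D^j_{mm}(g)}\, d\mu(g).
\end{equation*}
Each summand is an integral of three matrix coefficients, which by the standard orthogonality relation (a consequence of the Clebsch--Gordan decomposition and Schur orthogonality) equals $\tfrac1{n_j}\langle jm;Jp\,|\,jm\rangle^2$. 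Conservation of weight forces $p=0$, so a single term survives and
\begin{equation*}
\langle u_{j,m}, \chi_J\rangle = \langle jm;J0\,|\,jm\rangle^2 =: \lambda^{(J)}.
\end{equation*}
This simultaneously determines the support: the coefficient couples $\rho_j\otimes\rho_J$ to $\rho_j$ and therefore vanishes unless $J$ is an integer with $0\le J\le 2j$, which explains the infinitely many zero eigenvalues and leaves the $2j+1$ candidates $\lambda^{(0)},\dots,\lambda^{(2j)}$; the multiplicity $2J+1$ is inherited directly from Theorem~\ref{thm:1}.

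The remaining, and most laborious, step is to turn $\langle jm;J0|jm\rangle^2$ into the stated closed form. Here I would substitute the explicit Racah expression for the Clebsch--Gordan coefficient and reorganize its single factorial sum into the symmetric binomial sum $\sum_{z=0}^{j-m}(-1)^z\binom{2j-J}{z}\binom{J}{j-m-z}^2/\binom{2j}{j-m}$, pulling out the prefactor $\frac{(2j)!(2j+1)!}{(2j-J)!(2j+J+1)!}$ so that squaring reproduces $\lambda^{(J)}$. I expect this algebraic reduction to be the main obstacle, since the Racah summation and the symmetric binomial form are not related term by term but only through a hypergeometric rearrangement; the endpoint $J=0$, where the sum collapses to $(-1)^{j-m}$ and yields $\lambda^{(0)}=1$ in agreement with $\mathcal{B}_{j,m}$ being a Markov operator, is a convenient check along the way.
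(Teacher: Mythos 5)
Your proposal is correct and reaches the stated formula, but the computational core runs along a genuinely different line from the paper's. The common part is the reduction through Theorem~\ref{thm:1}; note that you simply assert that $(\mathrm{SU}(2),S^1)$ is a Gelfand pair, whereas the paper proves it by checking $x \in Kx^{-1}K$ and citing van Dijk's Proposition 6.1.3 --- this half-line verification should be included, since the theorem is unavailable without it. For the coefficients themselves, the paper works inside $\rho_j\otimes\rho_j$: it rewrites $u_{j,m}(g) = n_j\langle(\rho_j\otimes\rho_j)(g)\,y,\,y\rangle$ with $y = w_j^{(m)}\otimes w_j^{(-m)}$, expands the weight-zero vector $y = \sum_J \alpha_{j,m}^{(J)}w_J^{(0)}$ via Clebsch--Gordan, and applies Schur orthogonality, landing on $\lambda^{(J)} = \frac{2j+1}{2J+1}\bigl|\langle j,m;j,-m\,|\,J,0\rangle\bigr|^2$. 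You instead expand $\chi_{\rho_J}$ into diagonal matrix coefficients and quote the triple-product (Gaunt) integral, which couples $\rho_j\otimes\rho_J$ to $\rho_j$ and gives $\lambda^{(J)} = \langle j,m;J,0\,|\,j,m\rangle^2$, with weight conservation forcing $p=0$ and the triangle and integrality constraints giving the correct support $J\in\{0,1,\dots,2j\}$. The two expressions agree by the standard Clebsch--Gordan symmetry $\langle j,m;J,0\,|\,j,m\rangle^2 = \frac{2j+1}{2J+1}\langle j,m;j,-m\,|\,J,0\rangle^2$ (both equal $2j+1$ times the square of the same $3j$ symbol), so your identity is right; your route is shorter if the Gaunt formula is granted as known, while the paper's is self-contained --- indeed the Gaunt formula is itself proved by exactly the paper's tensor-plus-Schur argument, so the two proofs are cousins rather than strangers. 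For your remaining step, the ``main obstacle'' of massaging the Racah sum for $\langle j,m;J,0\,|\,j,m\rangle$ into the stated binomial form, the efficient move is to apply the symmetry above first: the coefficient $\langle j,m;j,-m\,|\,J,0\rangle$ is the one that substitutes directly into the tabulated Racah formula and yields the prefactor $\frac{(2j)!(2j+1)!}{(2j-J)!(2j+J+1)!}$ together with the sum $\sum_z(-1)^z\binom{2j-J}{z}\binom{J}{j-m-z}^2\big/\binom{2j}{j-m}$, which is exactly how the paper concludes in (\ref{eq:11}); your sanity checks ($\lambda^{(0)}=1$, and $\lambda^{(1)}=m^2/j(j+1)$ if you run it) match the paper's (\ref{eq:12}).
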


An important corollary of this result is a new proof of the formula for the spectrum of the Berezin transform of the orbit POVM obtained by choosing the highest weight vector. In this case, the POVMs $W_{j, j}$ give rise to the quantization maps $T_j(f) := \int_{S^2} f \, dW_{j, j}$ which provide a quantization scheme that is known to be equivalent to the standard quantization of $S^2$ coming from Kähler geometry. This follows from the fact that the coherent states in both cases are the same (cf. eq. (43) in \cite{Perelomov} and Definition 5.1.1, Example 7.1.8 and Theorem 7.2.1 in \cite{LeFloch}) In this case, Theorem \ref{thm:2} tells us that
\begin{equation*}
\lambda^{(J)} = \frac{(2j)!(2j+1)!}{(2j-J)!(2j+J+1)!} = \frac{2j \cdot \hdots \cdot (2j-J+1)}{(2j+J+1) \cdot \hdots \cdot (2j+2)},
\end{equation*}
in agreement with prior calculations by Zhang \cite[p.~385]{Zhang} and by Donaldson \cite[p.~613]{Donaldson}. It is easily verified that these eigenvalues satisfy
\begin{equation*}
1 = \lambda^{(0)} > \lambda^{(1)} > \lambda^{(2)} > \hdots > \lambda^{(2j)} > 0,
\end{equation*}
and hence the spectral gap is
\begin{equation*}
\gamma(\mathcal{B}_{j, j}) = 1 - \lambda^{(1)} = 1 - \frac{j}{j+1} = \frac{1}{j+1} = 2\hbar + O(\hbar^2).
\end{equation*}\par

We then turn our attention to lower weight vectors. We first consider the case $m = j-d$, where $d \in \mathbb{N}^+$ is a constant. In this case, it is no longer true that $\lambda^{(0)} > \lambda^{(1)} > \lambda^{(2)} > \hdots > \lambda^{(2j)}$.\\
To gain some insight, we choose $j = 100$ and for $j-m \in \{1,2,3,4\}$ we plot the values of the eigenvalues $\lambda^{(0)}, \lambda^{(1)}, ..., \lambda^{(200)}$.

\begin{figure}[ht!]
  \centering
  \begin{subfigure}[t]{0.48\linewidth}
    \includegraphics[width=\linewidth]{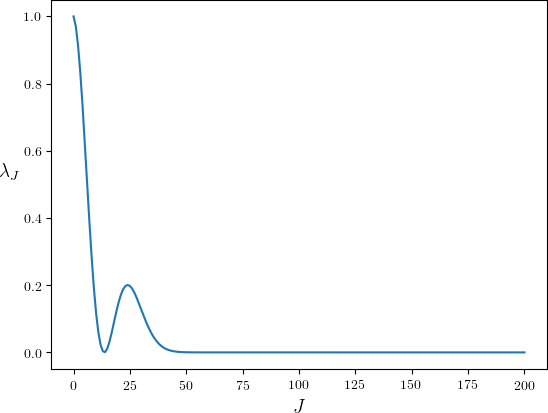}
     \caption{$j-m = 1$.}
  	 \label{fig:1a}
  \end{subfigure}
  \hfill
  \begin{subfigure}[t]{0.48\linewidth}
    \includegraphics[width=\linewidth]{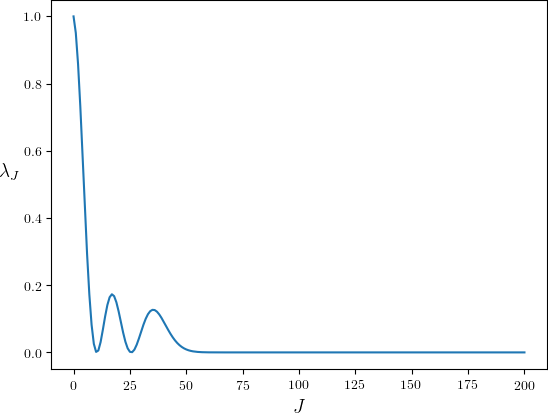}
    \caption{$j-m = 2$.}
  \end{subfigure}
  \par \smallskip \smallskip \smallskip
  \begin{subfigure}[t]{0.48\linewidth}
    \includegraphics[width=\linewidth]{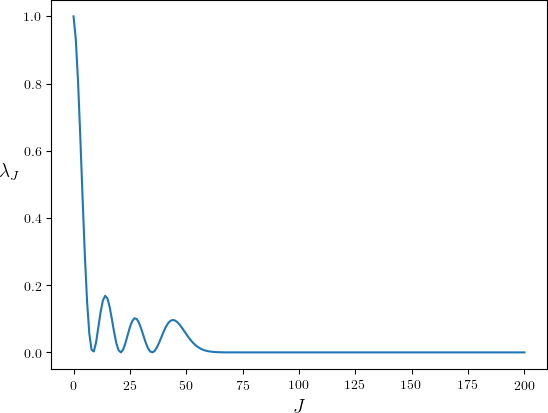}
    \caption{$j-m = 3$.}
  \end{subfigure}
  \hfill
  \begin{subfigure}[t]{0.48\linewidth}
    \includegraphics[width=\linewidth]{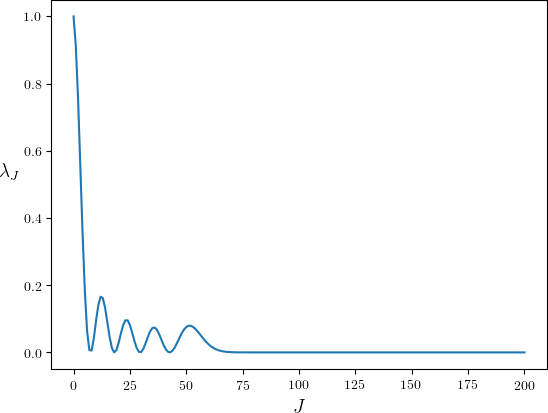}
    \caption{$j-m = 4$.}
  \end{subfigure}
  \caption{The spectrum of $\mathcal{B}_{j,m}$ for $j = 100$ and $j-m \in \{1,2,3,4\}$.}
  \label{fig:1}
\end{figure}

\noindent Looking at the plots shown in Figure \ref{fig:1} leads us to the following conjectures.

\begin{conjecture} \label{conj:1}
For every $d \in \mathbb{N}$ there exists $j_0$ such that for all $j \ge j_0$ and $m = j-d$, the eigenvalues of the Berezin transform $\mathcal{B}_{j, j-d}$ satisfy
\begin{equation*}
\lambda^{(1)} > \lambda^{(2)}, \lambda^{(3)}, \hdots, \lambda^{(2j)},
\end{equation*}
and hence in particular the spectral gap is
\begin{equation*}
\gamma(\mathcal{B}_{j, j-d}) = 1 - \lambda^{(1)} = 1 - \frac{(j-d)^2}{j(j+1)} = \frac{2d+1}{j} - \frac{(d+1)^2}{j(j+1)}.
\end{equation*}
\end{conjecture}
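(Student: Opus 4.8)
The plan is to reduce the conjecture to the single assertion that $\lambda^{(1)}$ is the largest eigenvalue among $\lambda^{(1)}, \dots, \lambda^{(2j)}$, and then to prove that assertion by combining an exact evaluation of the low eigenvalues with asymptotic and envelope estimates on the rest. First I would evaluate $\lambda^{(1)}$ from Theorem \ref{thm:2} with $j - m = d$: since $\binom{1}{d-z}$ is supported on $z \in \{d-1, d\}$, the inner sum collapses to $(-1)^d\big(\binom{2j-1}{d} - \binom{2j-1}{d-1}\big)/\binom{2j}{d} = (-1)^d(j-d)/j$, while the prefactor equals $j/(j+1)$; hence $\lambda^{(1)} = (j-d)^2/(j(j+1))$ and $1 - \lambda^{(1)} = (2d+1)/j + O(1/j^2)$. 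This already supplies the closed form for $\gamma$ claimed in the conjecture, conditional on $\lambda^{(1)}$ being second-largest, so the whole statement reduces to that ordering.

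Next I would record the leading asymptotics of $\lambda^{(J)}$ for fixed $J$. Writing $\lambda^{(J)} = C_J S_J^2$ with $C_J = \prod_{k=1}^J \frac{2j-k+1}{2j+k+1}$ and $S_J$ the inner sum, the $z=d$ and $z=d-1$ terms dominate and give $S_J = (-1)^d\big(1 - J(J+1)d/(2j)\big) + O(j^{-2})$, whereas $C_J = 1 - J(J+1)/(2j) + O(j^{-2})$; multiplying yields
\begin{equation*}
\lambda^{(J)} = 1 - \frac{(2d+1)\,J(J+1)}{2j} + O_{d,J}\!\left(\tfrac{1}{j^2}\right).
\end{equation*}
Since $J(J+1)/2$ equals $1$ at $J=1$ and is at least $3$ for $J \ge 2$, this gives $\lambda^{(1)} - \lambda^{(J)} \sim (2d+1)(J-1)(J+2)/(2j) > 0$, settling the ordering for every bounded $J$.

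The real work is uniformity over the whole range $2 \le J \le 2j$, and here a structural observation helps: I would verify that $S_J$, as a function of $\mu := J(J+1)$, is up to normalization a Hahn polynomial of degree $d$ (for $d=1$ it is $(\mu - 2j)/(2j)$), hence an orthogonal polynomial with $d$ real simple zeros. This yields two clean regimes. To the left of its smallest zero $S_J$ is monotone, so for $2 \le J$ with $\mu \le \mu_{\min}$ both $C_J$ and $S_J^2$ strictly decrease and $\lambda^{(J)} < C_1 S_1^2 = \lambda^{(1)}$; for $d=1$ this is the transparent bound $\lambda^{(J)} < \frac{j}{j+1}\frac{(2j-2)^2}{4j^2} = \lambda^{(1)}$, valid whenever $J(J+1) \le 2j$. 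For large $J$ the elementary estimate $C_J \le \exp\!\big(-J(J+1)/(2j+J+1)\big)$ decays (super)exponentially while $|S_J|$ grows only polynomially in $j$, so $\lambda^{(J)} = C_J S_J^2 \ll 1$. In the case $d=1$ the remaining window is closed by setting $J(J+1) = 2j(1+\sigma)$, whence $S_J^2 = \sigma^2$ and $C_J \le e^{-(1+\sigma)}$, giving $\lambda^{(J)} \le \sigma^2 e^{-(1+\sigma)} \le 4e^{-3} < \lambda^{(1)}$ and completing the second-highest-weight case.

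The hard part --- and the reason the statement is a conjecture for general $d$ --- is the oscillatory window $J \asymp \sqrt{j}$ once $d \ge 2$. There $S_J$ changes sign $d$ times, so $\lambda^{(J)}$ dips to $0$ at each zero and rises to a secondary maximum between consecutive zeros, and no elementary monotonicity survives. The goal would be to bound every one of these $O(d)$ secondary lobes by a fixed constant $1 - \delta_d < 1$; since $1 - \lambda^{(1)} = (2d+1)/j + O(1/j^2)$, this is smaller than $\delta_d$ precisely once $j \ge j_0(d)$, which simultaneously yields $\lambda^{(J)} < \lambda^{(1)}$ and explains the $d$-dependence of $j_0$. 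Carrying this out seems to require uniform (Plancherel--Rotach type) asymptotics for the Hahn polynomials $S_J$ in their oscillatory range, paired with the sharp prefactor constant $C_J \approx e^{-J(J+1)/(2j)}$, so as to control the height of each lobe; the single-lobe geometry at $d=1$ is what makes the argument elementary there and unavailable in general.
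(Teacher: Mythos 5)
This statement is Conjecture \ref{conj:1}, which the paper itself leaves open: the paper proves it only for $d=0$ (where the spectrum is strictly decreasing) and for $d=1$ (Theorem \ref{thm:3}). Your proposal reaches exactly the same frontier, and your final paragraph gives an accurate diagnosis of why: once $d \ge 2$, the inner sum $S_J$ changes sign $d$ times in the window $J \asymp \sqrt{j}$, producing secondary lobes whose heights must be bounded by a constant $\delta_d < 1$ beating the gap $(2d+1)/j$, and no elementary monotonicity does this. Your supporting computations check out against the paper: $\lambda^{(1)} = (j-d)^2/(j(j+1))$ agrees with (\ref{eq:12}), and the fixed-$J$ expansion $\lambda^{(J)} = 1 - (2d+1)J(J+1)/(2j) + O(1/j^2)$ agrees with the Remark closing Section \ref{sec:3.4}. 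The observation that $S_J$ is, up to normalization, a (dual) Hahn polynomial of degree $d$ in $\mu = J(J+1)$ with $d$ real simple zeros is a genuine structural addition not in the paper, and it neatly explains the count of oscillations predicted by Conjecture \ref{conj:2}.

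For $d=1$, where a comparison with an actual proof is possible, your route differs from the paper's. The paper analyzes the ratio $\lambda^{(J)}/\lambda^{(J-1)}$, factors the quartic $J^4-(8j+1)J^2+12j^2+4j$ to get the full down--up--down shape (thereby proving Conjecture \ref{conj:2} as well), and then compares the two local maxima numerically: $\lambda^{(1)} \ge 625/783$ versus $\lambda^{(\lfloor\sqrt{6j}\rfloor)} \le \frac{49}{9}e^{-72/37}$ for $j \ge \frac{27}{2}$, with direct inspection below that. You instead split at $\mu \le 2j$ versus $\mu > 2j$, using joint monotonicity of $C_J$ and $S_J^2$ on the left and a global envelope on the right; this avoids the ratio computation and targets Conjecture \ref{conj:1} directly, at the price of not establishing unimodality.

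There is, however, a concrete error in your envelope: the inequality $C_J \le e^{-(1+\sigma)}$ is false. For example, at $j = \frac{5}{2}$, $J = 2$ one has $C_2 = \frac{5}{7}\cdot\frac{4}{8} = \frac{5}{14} \approx 0.357$, while $\sigma = \frac{J(J+1)}{2j} - 1 = 0.2$ gives $e^{-1.2} \approx 0.301$. The elementary estimates only give $C_J \le \exp\bigl(-J(J+1)/(2j+J+1)\bigr)$ (the paper's bound) or, slightly better, $C_J \le \exp\bigl(-J(J+1)/(2j+1)\bigr)$ via $\ln\frac{1-x}{1+x} \le -2x$; since these denominators exceed $2j$, neither implies your claim. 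With the corrected bound the envelope becomes $\sigma^2\exp\bigl(-\frac{2j}{2j+1}(1+\sigma)\bigr)$, whose maximum over $\sigma$ is $\frac{(2j+1)^2}{j^2}\,e^{-2-\frac{2j}{2j+1}} = 4e^{-3} + O(1/j)$. So your conclusion does survive for all sufficiently large $j$ --- which is all that Conjecture \ref{conj:1} demands --- but not verbatim with your constants: at $j=\frac{5}{2}$, $J=3$ the corrected envelope equals $1.96\,e^{-2} \approx 0.265$, which exceeds $\lambda^{(1)} \approx 0.257$, so recovering the paper's sharp threshold $j \ge \frac{5}{2}$ would require finite inspection of small $j$, exactly as the paper does.
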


\begin{conjecture} \label{conj:2}
For every $d \in \mathbb{N}$ there exists $j_0$ such that for all $j \ge j_0$ and $m = j-d$ the sequence $\lambda^{(0)}, \lambda^{(1)}, \lambda^{(2)}, \hdots, \lambda^{(2j)}$ of eigenvalues of the Berezin transform $\mathcal{B}_{j, j-d}$ has $d$ local minima and $d$ local maxima.
\end{conjecture}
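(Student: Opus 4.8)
The plan is to read off the oscillations from the explicit formula of Theorem \ref{thm:2} in the regime where they occur, namely $J \sim \sqrt{j}$, and to reduce the extremum count to that of an explicit limiting profile built from Laguerre polynomials. Write $d = j-m$ and factor the eigenvalue as $\lambda^{(J)} = A^{(J)}\,(S^{(J)})^2$ with
\begin{equation*}
A^{(J)} = \frac{(2j)!(2j+1)!}{(2j-J)!(2j+J+1)!}, \qquad S^{(J)} = \sum_{z=0}^{d}(-1)^z\,\frac{\binom{2j-J}{z}\binom{J}{d-z}^2}{\binom{2j}{d}}.
\end{equation*}
Setting $J = s\sqrt{2j}$ with $s$ in a fixed compact interval, I would first establish the pointwise limits $A^{(J)} \to e^{-s^2}$ (from $\log A^{(J)} = -J^2/(2j)+o(1)$) and, using $\binom{2j-J}{z} \sim (2j)^z/z!$, $\binom{J}{d-z}\sim J^{d-z}/(d-z)!$ and $\binom{2j}{d}\sim (2j)^d/d!$,
\begin{equation*}
S^{(J)} \;\longrightarrow\; \sum_{z=0}^{d}(-1)^z\,\frac{d!}{z!\,((d-z)!)^2}\,(s^2)^{d-z} \;=\; (-1)^d\,L_d(s^2),
\end{equation*}
where $L_d$ is the Laguerre polynomial of degree $d$. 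Hence $\lambda^{(J)} \to f_d(s) := e^{-s^2}L_d(s^2)^2$ uniformly on compacts, and I would record the sharper estimate $\log(\lambda^{(J+1)}/\lambda^{(J)}) = (\log f_d)'(s)/\sqrt{2j} + o(1/\sqrt{j})$, valid away from the zeros of $f_d$, so that sign changes of the discrete differences follow the critical points of $f_d$.

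The second step counts the extrema of $f_d$. Writing $\phi(s) = e^{-s^2/2}L_d(s^2)$ so that $f_d = \phi^2$, the $d$ simple positive zeros $0 < s_1 < \dots < s_d$ of $L_d(s^2)$ are exactly the interior local minima of $f_d$, each of value $0$. The local maxima are the critical points of $\phi$ away from its zeros, i.e. the positive roots of $g(x) := 2L_d'(x) - L_d(x)$, a polynomial of degree exactly $d$. Rolle's theorem on each gap $(s_k, s_{k+1})$ and on $(s_d, \infty)$ (where $\phi$ vanishes at $s_d$ and at $+\infty$) produces at least $d$ positive roots of $g$; since $\deg g = d$, this is sharp, so there is exactly one maximum per gap, one beyond $s_d$, and none on $(0, s_1)$, where $f_d$ decreases strictly from $f_d(0)=1$. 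Thus $f_d$ has exactly $d$ local minima and $d$ local maxima, all interior, and the roots of $g$ are simple.

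The third step transfers the count to the discrete sequence. The local minima are the cleanest: at each sign change of $S^{(J)}$ the factor $(S^{(J)})^2$ has a strict local minimum while $A^{(J)} > 0$ varies slowly, so $\lambda^{(J)}$ has a local minimum there; if $S^{(J)}$ has exactly $d$ sign changes in $(0,2j)$ this forces exactly $d$ minima \emph{globally}, not merely in the scaling window. For the maxima I would work on the window $s \in [0, C(d)]$, with $C(d)$ fixed beyond the largest root of $g$: uniform convergence and the difference estimate place exactly one local maximum of $\lambda^{(J)}$ near each of the $d$ (simple) roots of $g$, at which $f_d \neq 0$ so the estimate applies, with monotone stretches in between.

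Two points remain, and I expect them to be the crux. First, the tail $C(d)\sqrt{2j}\le J \le 2j$, which the scaling limit does not see, must contain no further extrema; there $S^{(J)}>0$ and a direct ratio estimate — $A^{(J+1)}/A^{(J)} = (2j-J)/(2j+J+2)$ decaying against $S^{(J+1)}/S^{(J)} = 1 + O(d/J)$ — should force $\lambda^{(J+1)} < \lambda^{(J)}$, so the sequence decreases monotonically and adds no maximum. Second, and most importantly, I need that $S^{(J)}$ has exactly $d$ sign changes in $(0,2j)$ and stays positive beyond the last one. The natural route is to identify the inner sum with a multiple of a dual Hahn polynomial of degree $d$ in the variable $J(J+1)$, whose orthogonality yields exactly $d$ real zeros, all lying in $(0,2j)$ (for $d=1$ this is explicit, since $S^{(J)} = (J(J+1)-2j)/(2j)$). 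Making this identification precise, and upgrading the scaling-limit convergence to quantitative error bounds strong enough to locate the near-zero minima and to extract an explicit $j_0 = j_0(d)$, is where the real work lies.
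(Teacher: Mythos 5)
A caveat about the target first: this statement is Conjecture \ref{conj:2} of the paper, and the paper does \emph{not} prove it in general. It is established only for $d=1$, inside the proof of Theorem \ref{thm:3}, by an exact and entirely elementary computation: the ratio $\lambda^{(J)}/\lambda^{(J-1)}$ is written in closed form, and the sign of $\lambda^{(J)}-\lambda^{(J-1)}$ is read off from the factorization $J^4-(8j+1)J^2+12j^2+4j=(J^2-\kappa_-)(J^2-\kappa_+)$, giving the full monotonicity pattern for \emph{every} $j$, with the minimum at $\lfloor\sqrt{2j}\rfloor$ and the maximum at $\lfloor\sqrt{6j}\rfloor$. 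Your continuum analysis is consistent with this known case: the limits $A^{(J)}\to e^{-s^2}$ and $S^{(J)}\to(-1)^d L_d(s^2)$ are correct, and for $d=1$ the profile $f_1(s)=e^{-s^2}(1-s^2)^2$ has its interior minimum at $s=1$ and maximum at $s=\sqrt{3}$, i.e. at $J\approx\sqrt{2j}$ and $J\approx\sqrt{6j}$. So your plan attacks the genuinely open case $d\ge 2$, and its skeleton (Laguerre asymptotics of the Clebsch--Gordan coefficients, the Rolle count of extrema of $e^{-s^2}L_d(s^2)^2$, a dual Hahn identification for the sign changes of $S^{(J)}$) is the natural one and is sound at the level of the limiting profile.

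As a proof, however, it has gaps --- which you flag yourself --- and they sit exactly where the difficulty of the conjecture lies. (i) Locating the minima: a sign change of $S^{(J)}$ between consecutive integers does not by itself yield exactly one local minimum of $\lambda^{(J)}$ and no spurious extrema nearby; your log-ratio estimate is explicitly valid only away from the zeros of $f_d$, so you have no control of the difference sequence precisely in the windows where the minima live, and you would need a discrete substitute there, e.g.\ strict monotonicity of $|S^{(J)}|$ against the slowly varying factor $A^{(J)}$ on both sides of each sign change. (ii) The assertion that $S^{(J)}$ is (a multiple of) a dual Hahn polynomial of degree $d$ in $J(J+1)$, with exactly $d$ simple zeros, all lying in the scaling window and separated on the scale $\sqrt{j}$, is stated rather than proven; the identification is standard, but the whole extremum count rests on it, including simplicity and location of the zeros, so it must be carried out. (iii) The tail $J\ge C(d)\sqrt{2j}$ and the matching between window and tail require estimates uniform in $j$, from which the conjectured threshold $j_0(d)$ must actually be extracted; the one-line ratio heuristic is plausible but depends on (ii) to guarantee $S^{(J)}$ has no zeros there. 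Until (i)--(iii) are done the proposal is a program rather than a proof --- a reasonable one, but it should be judged against the fact that the statement it addresses is open in the paper except for $d\in\{0,1\}$.
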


We then prove these conjectures in the simplest case ($d = 1$) of second-highest weight vector using straightforward algebra and estimates:

\begin{theorem} \label{thm:3}
For all $j \in \frac{1}{2} \mathbb{N}$, the eigenvalues of the Berezin transform $\mathcal{B}_{j, j-1}$ satisfy
\begin{equation*}
\lambda^{(0)} > \lambda^{(1)} > \hdots > \lambda^{(\lfloor\sqrt{2j}\rfloor)} < \hdots < \lambda^{(\lfloor\sqrt{6j}\rfloor)} > \hdots > \lambda^{(2j)}.
\end{equation*}
Moreover, for $j \ge \frac{5}{2}$, we have $\lambda^{(1)} > \lambda^{(\lfloor\sqrt{6j}\rfloor)}$, so that
\begin{equation*}
\lambda^{(1)} > \lambda^{(2)}, \lambda^{(3)}, \hdots, \lambda^{(2j)},
\end{equation*}
and hence, in particular,
\begin{equation*}
\gamma(\mathcal{B}_{j, j-1}) = 1 - \lambda^{(1)} = 1 - \frac{(j-1)^2}{j(j+1)} = \frac{3}{j} - \frac{4}{j(j+1)} = 6\hbar + O(\hbar^2).
\end{equation*}
\end{theorem}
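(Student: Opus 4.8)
The plan is to push everything through the explicit one-variable formula that Theorem \ref{thm:2} supplies at $j-m=1$, and then analyze a single rational sequence. First I would substitute $j-m=1$ into Theorem \ref{thm:2}: the inner sum has only the terms $z=0,1$, and evaluating them against $\binom{2j}{1}=2j$ gives $\sum_{z=0}^{1}(-1)^z\binom{2j-J}{z}\binom{J}{1-z}^2/\binom{2j}{1}=\frac{J^2-(2j-J)}{2j}=\frac{J(J+1)-2j}{2j}$. Hence $\lambda^{(J)}=A_J\,B_J$, where $A_J=\frac{(2j)!\,(2j+1)!}{(2j-J)!\,(2j+J+1)!}$ is the highest-weight eigenvalue and $B_J=\big(\frac{f(J)}{2j}\big)^2$ with $f(J):=J(J+1)-2j$. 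Here $A_J$ is strictly decreasing while $B_J$ is a squared quantity vanishing near $J\approx\sqrt{2j}$, so the down--up--down shape is already plausible; the work is to locate the turning points exactly.

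The key step is to control the sign of $\lambda^{(J+1)}-\lambda^{(J)}$. Since $A_{J+1}/A_J=\frac{2j-J}{2j+J+2}$, one computes $\lambda^{(J+1)}-\lambda^{(J)}=\frac{A_J}{(2j)^2(2j+J+2)}\,g(J)$ with $g(J):=(2j-J)f(J+1)^2-(2j+J+2)f(J)^2$, so the sign of the increment equals the sign of $g(J)$. The heart of the argument is a factorization of this degree-five polynomial: writing $K=J+1$ and $M=K^2-2j$, one has $f(J)=M-K$ and $f(J+1)=M+K$, and a short computation collapses $g$ to $g(J)=-2(J+1)\,\Psi\big((J+1)^2\big)$, where $\Psi(y)=y^2-(8j+1)y+4j(3j+1)$. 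As $J+1>0$, the sequence strictly increases from $J$ to $J+1$ exactly when $\Psi\big((J+1)^2\big)<0$, i.e. when $(J+1)^2$ lies strictly between the roots $y_\pm=\tfrac12\big(8j+1\pm\sqrt{16j^2+1}\big)$ of $\Psi$ (the discriminant collapsing to $16j^2+1$).

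It then remains to pin $y_\pm$ against consecutive squares. Elementary two-sided bounds give $2j<y_-<2j+\tfrac12$ and $6j<y_+<6j+1$; each reduces to comparing $\sqrt{16j^2+1}$ with $4j\pm1$ or $4j$, which after squaring becomes an evident inequality such as $8j>0$. Because $2j$ and $6j$ are integers, these force $\lfloor\sqrt{2j}\rfloor^2\le 2j<y_-<2j+1\le(\lfloor\sqrt{2j}\rfloor+1)^2$ and likewise $\lfloor\sqrt{6j}\rfloor^2\le 6j<y_+<6j+1\le(\lfloor\sqrt{6j}\rfloor+1)^2$. Thus $(J+1)^2$ crosses $y_-$ precisely between $J=\lfloor\sqrt{2j}\rfloor-1$ and $J=\lfloor\sqrt{2j}\rfloor$, and crosses $y_+$ precisely between $J=\lfloor\sqrt{6j}\rfloor-1$ and $J=\lfloor\sqrt{6j}\rfloor$; together with the monotonicity of $(J+1)^2$ this yields exactly the three monotone runs claimed, with minimum at $\lfloor\sqrt{2j}\rfloor$ and local maximum at $\lfloor\sqrt{6j}\rfloor$. (The minimum value may be $0$ when $2j$ is a product of consecutive integers, but the strict turning is unaffected since the neighbouring $f$-values are nonzero.)

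For the ``moreover'' statement I would compute $\lambda^{(1)}=\frac{(j-1)^2}{j(j+1)}$ directly and compare it with the local maximum $\lambda^{(\lfloor\sqrt{6j}\rfloor)}$. The plan is to bound the latter from above: writing $A_J=\prod_{i=0}^{J-1}\frac{2j-i}{2j+2+i}$ and using $\log\frac{2j-i}{2j+2+i}\le-\frac{2i+2}{2j+i+2}$ yields $\log A_{\lfloor\sqrt{6j}\rfloor}\le-\frac{J(J+1)}{2j+J+1}\to-3$, hence $A_{\lfloor\sqrt{6j}\rfloor}\lesssim e^{-3}$, while $f(\lfloor\sqrt{6j}\rfloor)\le 4j+\sqrt{6j}$ gives $B_{\lfloor\sqrt{6j}\rfloor}\lesssim 4$; since $\lambda^{(1)}\to1$, the inequality $\lambda^{(1)}>\lambda^{(\lfloor\sqrt{6j}\rfloor)}$ holds with ample margin for large $j$. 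The hard part is that this margin is genuinely thin for moderate $j$ --- indeed the inequality fails at $j=2$, where $\lambda^{(3)}>\lambda^{(1)}$ --- so the estimates must be sharpened to reach down to $j=\tfrac52$; I expect to prove a clean bound for $j\ge j_1$ and verify the finitely many cases $\tfrac52\le j<j_1$ by direct computation. Once $\lambda^{(1)}>\lambda^{(\lfloor\sqrt{6j}\rfloor)}$ is secured, the three-run structure immediately gives $\lambda^{(1)}>\lambda^{(J)}$ for every $J\ge2$, whence $\gamma(\mathcal B_{j,j-1})=1-\lambda^{(1)}=\frac{3j-1}{j(j+1)}$, and the stated expansion $6\hbar+O(\hbar^2)$ follows exactly as in the highest-weight case.
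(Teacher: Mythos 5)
Your proposal is correct and follows essentially the same route as the paper: the same simplification of $\lambda^{(J)}$ at $m=j-1$, the same quartic $y^2-(8j+1)y+4j(3j+1)$ with discriminant $16j^2+1$ governing the sign of consecutive steps, the same pinning of its roots between consecutive integer squares to get the down--up--down shape, and the same large-$j$ exponential bound plus finite verification for the claim $\lambda^{(1)}>\lambda^{(\lfloor\sqrt{6j}\rfloor)}$ (the paper takes the explicit threshold $j\ge\frac{27}{2}$ and checks $\frac52\le j<\frac{27}{2}$ directly). The only noteworthy variation is that you compare differences $\lambda^{(J+1)}-\lambda^{(J)}$ rather than the paper's ratios $\lambda^{(J)}/\lambda^{(J-1)}$, which cleanly sidesteps the degenerate case $\lambda^{(J-1)}=0$ (occurring when $2j=J(J+1)$) that the ratio formulation glosses over.
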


Finally, we consider the case where $j - |m|$ is unbounded. We then again have a sequence of POVMs $\left(W_{j, m_j}\right)_{j \in \frac{1}{2} \mathbb{N}}$, and the first question one should be concerned with is whether such a sequence of POVMs yields a Berezin-Toeplitz quantization. It turns out that the answer is negative.

\begin{theorem} \label{thm:4}
Let $\left(m_j\right)_{j \in \frac{1}{2} \mathbb{N}}$ be a sequence such that $m_j \in \{ -j, -j+1, \hdots, j \}$ and assume $(j-|m_j|)_{j \in \frac{1}{2} \mathbb{N}}$ is unbounded. Consider the sequence of POVMs $\left(W_{j, m_j}\right)_{j \in \frac{1}{2} \mathbb{N}}$ and let $\Lambda = (1/k)_{k \in \mathbb{N}^+}$. Then the sequence of maps $\left(Q_\hbar\right)_{\hbar \in \Lambda}$ defined by $Q_\hbar(f) = \int_{S^2} f \, dW_{j, m_j}$, where $j$ and $\hbar$ are connected via $\hbar = \frac{1}{2j}$, does not satisfy the properties of a Berezin-Toeplitz quantization.
\end{theorem}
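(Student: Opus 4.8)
The plan is to argue by contradiction: I will assume that $(Q_\hbar)_{\hbar \in \Lambda}$ is a Berezin--Toeplitz quantization and extract from the axioms a quantitative constraint on a single eigenvalue of the Berezin transform, which I will then contradict using the explicit formula of Theorem \ref{thm:2}. The entire argument takes place on the fixed test function $f = z$, the height function on $S^2$ normalized so that $\|f\|_{L_2(\omega)} = 1$. This $f$ lies in the degree-one spherical-harmonic component, i.e.\ the $\rho_1$-isotypic subspace of $L_2(S^2)$. By Theorem \ref{thm:2} (with $J = 1$) this subspace is an eigenspace of $\mathcal{B}_{j, m_j}$ with eigenvalue $\lambda^{(1)} = m_j^2/(j(j+1))$, so that $\langle \mathcal{B}_{j, m_j} f, f \rangle = \lambda^{(1)}$.

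The key step is to show that the Berezin--Toeplitz axioms force $1 - \lambda^{(1)} = O(\hbar)$. I will first record the trace identity $\operatorname{tr}\big(Q_\hbar(\phi)\big) = n_j \int_{S^2} \phi \, d\omega$, which holds automatically for any orbit POVM since $\operatorname{tr}(P_{\rho(\tilde{x})v}) = 1$; this is precisely the reformulated trace correspondence with $\alpha_\hbar = \omega$, and it is a computed fact, not an axiom. Writing out the Berezin transform and using that $f$ is real (so $Q_\hbar(f)$ is Hermitian), I obtain
\begin{equation*}
\langle \mathcal{B}_\hbar f, f \rangle = \tfrac{1}{n_j} \operatorname{tr}\big(Q_\hbar(f)^2\big).
\end{equation*}
Quasi-multiplicativity (axiom (3) with $g = f$) gives $\|Q_\hbar(f^2) - Q_\hbar(f)^2\|_{\mathrm{op}} = O(\hbar)$, and since traces on the $n_j$-dimensional space $V_j$ are controlled by $n_j \|\cdot\|_{\mathrm{op}}$, this yields $\tfrac{1}{n_j} \operatorname{tr}(Q_\hbar(f)^2) = \tfrac{1}{n_j} \operatorname{tr}(Q_\hbar(f^2)) + O(\hbar)$. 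Applying the trace identity to $\phi = f^2$ gives $\tfrac{1}{n_j} \operatorname{tr}(Q_\hbar(f^2)) = \int_{S^2} f^2 \, d\omega = \|f\|_{L_2(\omega)}^2 = 1$. Combining, $\lambda^{(1)} = 1 + O(\hbar)$, that is, $1 - \lambda^{(1)} = O(\hbar)$.

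It remains to contradict this with the explicit value. Using $\hbar = 1/(2j)$ and $\lambda^{(1)} = m_j^2/(j(j+1))$, a direct computation gives
\begin{equation*}
\frac{1 - \lambda^{(1)}}{\hbar} = \frac{2\big(j^2 + j - m_j^2\big)}{j+1}.
\end{equation*}
Writing $d_j = j - |m_j|$ and using $0 \le d_j \le j$, I will bound $j^2 + j - m_j^2 = j + d_j(2j - d_j) \ge j(1 + d_j)$, whence $(1 - \lambda^{(1)})/\hbar \ge 1 + d_j$. Since $(j - |m_j|)_j$ is unbounded, along a subsequence $d_j \to \infty$, so $(1 - \lambda^{(1)})/\hbar \to \infty$ on that subsequence. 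This contradicts $1 - \lambda^{(1)} = O(\hbar)$, and therefore no Berezin--Toeplitz quantization can arise. As the trace identity holds unconditionally, the axiom that must fail is quasi-multiplicativity (3).

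The main obstacle is the middle step: extracting the rate $1 - \lambda^{(1)} = O(\hbar)$ purely from the abstract axioms, rather than from a (here unavailable) asymptotic expansion $\mathcal{B}_\hbar = \mathrm{Id} + \hbar \mathcal{L} + o(\hbar)$ of the Berezin transform. The device that makes this work is the identity $\langle \mathcal{B}_\hbar f, f \rangle = \tfrac{1}{n_j} \operatorname{tr}(Q_\hbar(f)^2)$ together with the automatic trace identity, which convert quasi-multiplicativity into a sharp statement about a single eigenvalue. One should also verify the harmless normalization points --- that $v = w_j^{(m)}$ is a unit vector, that $f = z$ genuinely sits in the $\rho_1$-isotypic component with eigenvalue $\lambda^{(1)}$, and that the implied constant in axiom (3) may be taken fixed for the fixed $f$ as $\hbar \to 0$ --- all of which are routine.
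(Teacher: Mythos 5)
Your proof is correct, and it takes a genuinely different route from the paper's. The paper proves Theorem \ref{thm:4} by first checking that $Q_\hbar$ is $\mathrm{SU}(2)$-equivariant and then invoking the classification theorem of \cite{IKP} (Theorem 6.2 there): any equivariant quantization must agree, up to unitary conjugation and an $O(\hbar^2)$ error in operator norm, with the heat-flow-modified standard quantization $T_\hbar^{(t)}(f) = T_\hbar(e^{-t\hbar\Delta}f)$ for some $t \ge 0$. It then converts this operator-norm closeness into closeness of the quadratic forms of the two Berezin transforms $e^{-2t\hbar\Delta}\mathcal{B}_{j,j}$ and $\mathcal{B}_{j,m_j}$, tests against a degree-one spherical harmonic, and obtains $(j-|m_j|)(j+|m_j|)/j^2 = 4t\hbar + O(\hbar^2)$, whence $j-|m_j| = O(1)$, a contradiction. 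You reach the same contradiction without the classification theorem: the identity $\langle \mathcal{B}_\hbar f, f\rangle = \tfrac{1}{n_j}\operatorname{tr}\big(Q_\hbar(f)^2\big)$ (valid for real $f$ by the definition of the dual map), the exact trace identity $\operatorname{tr}\big(Q_\hbar(\phi)\big) = n_j \int_{S^2} \phi\, d\omega$ (automatic for orbit POVMs, since the density is a rank-one projection), the elementary bound $|\operatorname{tr} A| \le n_j \|A\|_{\mathrm{op}}$, and quasi-multiplicativity applied to the single pair $(f,f)$ together force $1 - \lambda^{(1)} = O(\hbar)$; your algebra then gives $(1-\lambda^{(1)})/\hbar \ge 1 + (j-|m_j|)$, which is unbounded along a subsequence. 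All of these steps check out. Your approach buys self-containedness (no dependence on the nontrivial external classification result) and it isolates the failing axiom: since the trace identity is unconditional, it is quasi-multiplicativity that the maps $Q_\hbar$ must violate. The paper's approach buys structural information beyond the negative result: the comparison with $T_\hbar^{(t)}$ is what underlies its concluding Remark that for constant $j-m=d$ the spectrum of $\mathcal{B}\big(Q_\hbar^{(d)}\big)$ matches that of $\mathcal{B}\big(T_\hbar^{(d)}\big)$ to $O(\hbar^2)$, a phenomenon your argument does not detect.

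One citation should be repaired. The claim that the degree-one spherical harmonics form the $\lambda^{(1)}$-eigenspace of $\mathcal{B}_{j,m_j}$ does not follow from Theorem \ref{thm:2} alone, which identifies the spectrum only as a multiset with multiplicities. One needs that $\mathcal{B}_{j,m}$ commutes with the rotation action on $L_2(S^2,\omega)$ (being a convolution operator), hence acts on each space $\mathcal{H}_k$ of spherical harmonics as a scalar by Schur's lemma, and that these scalars are the $\lambda^{(k)}$ in the correct order; this is precisely the paragraph the paper devotes to the point inside its own proof of Theorem \ref{thm:4}. You do flag this verification among your ``routine'' normalization checks, which is fair — it is standard given the machinery of Section \ref{sec:2} — but it is an argument in its own right rather than a consequence of Theorem \ref{thm:2}, and your write-up should include it.
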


To prove this theorem, we use the formula for the spectrum of $\mathcal{B}_{j, m}$ given by Theorem \ref{thm:2} in conjunction with a classification of certain quantizations of $S^2$ obtained in \cite{IKP}.\par

The analogous question for the case where $j-m = d$ is a positive constant remains open, even for the second-highest weight.\par

The rest of the paper is organized as follows.\par

In Section \ref{sec:2} we start with the calculation of the spectrum of the Berezin transform of an orbit POVM in terms of the spectra of the Fourier coefficients of the associated function $u$ introduced above. Then we focus on the case of Gelfand pairs and prove Theorem \ref{thm:1}.\par

In Section \ref{sec:3} we restrict attention to the particular case where $G = \text{SU}(2)$ and $K \simeq S^1$, and carry out a fully explicit calculation of the spectra of Berezin transforms of orbit POVMs on $G/K \simeq S^2$ obtained by choosing vectors of arbitrary weights, thus proving Theorem \ref{thm:2}. Then we use these expressions for the spectra in order to study orbit POVMs obtained from non-highest weight vectors and prove Theorems \ref{thm:3} and \ref{thm:4}.

\section{Spectrum of the Berezin Transform of an Orbit POVM} \label{sec:2}
Let $G$ be a compact group with normalized Haar measure $\mu$, so that $\mu(G) = 1$. Fix a non-trivial unitary irreducible representation $\rho: G \rightarrow U(V)$, of dimension $\dim V = n$, and fix a vector $v \in V$. Define the subgroup
\begin{equation*}
K = \left\{ g \in G \mid \rho(g)v = e^{i\theta}v \text{ for some } \theta \in [0,2\pi) \right\}
\end{equation*}
and let $W$ be the $\mathscr{L}(V)$-valued orbit POVM on the orbit space $\Omega = G/K$ defined by
\begin{equation*}
dW(x) = n \, P_{\rho(\tilde{x})v} \: d\omega(x).
\end{equation*}
We then have the corresponding map $T: L_2(\Omega, \omega) \rightarrow \mathscr{L}(V)$,
\begin{equation*}
T(f) = \int_\Omega f \, dW = n \int_\Omega f(x) P_{\rho(\tilde{x})v} \, d\omega(x),
\end{equation*}
and the dual map $T^*: \mathscr{L}(V) \rightarrow L_2(\Omega, \omega)$,
\begin{equation*}
T^*(A)(x) = n \: \text{tr} \left( P_{\rho(\tilde{x})v} A \right).
\end{equation*}
We thus have the associated Berezin transform $\mathcal{B}: L_2(\Omega, \omega) \rightarrow L_2(\Omega, \omega)$,
\begin{equation*}
\mathcal{B} = \frac{1}{n} T^* T.
\end{equation*}
Our goal in the following two subsections is to arrive at the expression (\ref{eq:7}) in order to study the spectrum of $\mathcal{B}$ and prove Theorem \ref{thm:1}.

\subsection{The Berezin Transform as a Convolution Operator}
Explicitly, the operator $\mathcal{B}$ is given by
\begin{equation*}
\begin{split}
(\mathcal{B}f)(s) & = n \int\limits_\Omega f(t) \, \text{tr} \left( P_{\rho(t)v} P_{\rho(s)v} \right) d\omega(t) \\
 & = \int\limits_\Omega n \: \text{tr} \left( P_{\rho(s)v} P_{\rho(t)v} \right) f(t) \, d\omega(t),
\end{split}
\end{equation*}
and thus has kernel
\begin{equation*}
\mathcal{B}(s,t) = n \: \text{tr} \left( P_{\rho(s)v} P_{\rho(t)v} \right).
\end{equation*}
Since
\begin{equation*}
\text{tr} \left( P_{\rho(s)v} P_{\rho(t)v} \right) = \left| \left<\rho(s)v, \rho(t)v\right>\right|^2 = \left| \left< \rho(t)^{-1} \rho(s) \, v, v \right> \right|^2 = \left| \left< \rho(t^{-1}s) \, v, v \right> \right|^2
\end{equation*}
(recall $\rho$ is unitary), we have
\begin{equation*}
\mathcal{B}(s,t) = u(t^{-1}s)
\end{equation*}
where the function $u: G \rightarrow \mathbb{R}$ is defined by
\begin{equation*}
u(g) = n \left| \left< \rho(g)v, v \right> \right|^2.
\end{equation*}
It can be readily verified that $u$ is a bi-$K$-invariant function.\par
For a function $f \in L_2(\Omega, \omega)$, let $F(g) := f([g])$ be its lifting to $G$, so that $F \in L_2(G, \mu)$ and $F$ is right-$K$-invariant. Then
\begin{equation*}
\begin{split}
(\mathcal{B}f)(s) & = \int\limits_{\Omega} \mathcal{B}(s,t) \, f(t) \, d \omega(t) = \int\limits_{\Omega} u(t^{-1}s) \, f(t) \, d \omega(t) \\
 & = \int\limits_{\Omega} \int\limits_{K} F(tk) \, u \big( (tk)^{-1} s \big) \, dk \, d \omega(t) = \int\limits_{G} F(t) \, u(t^{-1}s) \, d \mu(t)
\end{split}
\end{equation*}
It follows that
\begin{equation*}
\mathcal{B}f = \left. \left( F * u \right) \right|_\Omega.
\end{equation*}

\subsection{The Eigenfunction Equation}
Assume $f \in L_2(\Omega, \omega)$ is an eigenfunction of $\mathcal{B}$ with eigenvalue $\lambda$,
\begin{equation} \label{eq:1}
\mathcal{B}f = \lambda f.
\end{equation}
Then we have the equality
\begin{equation*}
\left. \left( F * u \right) \right|_\Omega = \lambda f,
\end{equation*}
which can be lifted to $G$ by right-$K$-invariance as
\begin{equation} \label{eq:2}
F * u = \lambda F.
\end{equation}
Conversely, if (\ref{eq:2}) holds, and $\lambda \neq 0$, then the right-$K$-invariance of $u$ implies the right-$K$-invariance of $F * u$, therefore $F = \frac{1}{\lambda} (F * u)$ is right-$K$-invariant too,
and hence $f = \left. F \right|_\Omega$ satisfies (\ref{eq:1}).\\
We thus obtain for any $\lambda \neq 0$ a one-to-one correspondence between $\lambda$-eigenfunctions $f$ of $\mathcal{B}$ and functions $F \in L_2(G, \mu)$ satisfying (\ref{eq:2}).\par
To further investigate (\ref{eq:2}) we invoke Theorem 5.5.7 from \cite{Kowalski}, which states that for any $\zeta \in L_2(G, \mu)$,
\begin{equation} \label{eq:3}
\zeta(x) = \sum_{\varphi \in \widehat{G}} (\dim{\varphi}) \: \text{tr}\left( \widehat{\zeta}(\varphi) \varphi(x) \right),
\end{equation}
where
\begin{equation} \label{eq:4}
\widehat{\zeta}(\varphi) = \int\limits_G \zeta(x) \varphi(x^{-1}) d\mu(x).
\end{equation}
We conclude that (\ref{eq:2}) is equivalent to
\begin{equation} \label{eq:5}
\widehat{F*u}(\varphi) = \widehat{\lambda F}(\varphi)
\end{equation}
for all $\varphi \in \widehat{G}$.
It is a standard fact that
\begin{equation*}
\widehat{F*u}(\varphi) = \widehat{u}(\varphi) \widehat{F}(\varphi),
\end{equation*}
and thus (\ref{eq:5}) reduces to
\begin{equation} \label{eq:6}
\widehat{u}(\varphi) \widehat{F}(\varphi) = \lambda \widehat{F}(\varphi)
\end{equation}
for all $\varphi \in \widehat{G}$.\par
Note that (\ref{eq:3}), (\ref{eq:4}) give a linear bijection
\begin{equation*}
\Psi: L_2(G, \mu) \,\xrightarrow{\sim}\, \prod_{\varphi \in \widehat{G}} \End V_\varphi,
\end{equation*}
where $V_\varphi$ is the finite-dimensional vector space of the representation $\varphi$, of dimension $d_\varphi := \dim \varphi$.
Now (\ref{eq:6}) implies that $F \in L_2(G, \mu)$ satisfies (\ref{eq:2}) if and only if $\Psi(F)$ satisfies $T_u \Psi(F) = \lambda \Psi(F)$, where
\begin{equation*}
T_u: \prod_{\varphi \in \widehat{G}} \End V_\varphi \rightarrow \prod_{\varphi \in \widehat{G}} \End V_\varphi,
\end{equation*}
\begin{equation*}
T_u \left( (M_\varphi)_{\varphi \in \widehat{G}} \right) = (\widehat{u}(\varphi) M_\varphi)_{\varphi \in \widehat{G}}.
\end{equation*}
We conclude that
\begin{equation*}
\text{Spec}^*(\mathcal{B}) = \text{Spec}^*(T_u),
\end{equation*}
where $\text{Spec}^*(S)$ denotes the multiset of all eigenvalues of the operator $S$, with each eigenvalue repeating according to its multiplicity.\par
Denoting by $U_\varphi$ the natural embedding of $\End V_\varphi$ into $\prod_{\varphi \in \widehat{G}} \End V_\varphi$, we see that $U_\varphi$ is a $T_u$-invariant subspace, and under the identification of $U_\varphi$ with $\End V_\varphi$, the restriction $T_{u, \varphi} := \left. T_u \right|_{U_\varphi}$ is the multiplication operator by $\widehat{u}(\varphi)$, $T_{u, \varphi}(M) = \widehat{u}(\varphi) M$. By fixing a basis of $V_\varphi$ and viewing the elements of $\End V_\varphi$ as $d_\varphi \times d_\varphi$ matrices, we can decompose $\End V_\varphi$ into $T_{u, \varphi}$-invariant components as
\begin{equation*}
\End V_\varphi = (\End V_\varphi)^{(1)} \oplus ... \oplus (\End V_\varphi)^{(d_\varphi)},
\end{equation*}
where $(\End V_\varphi)^{(j)}$ denotes the set of $d_\varphi \times d_\varphi$ matrices whose columns are all zero except for column $j$. Denoting by $T_{u, \varphi}^{(j)}$ the restriction of $T_{u, \varphi}$ to $(\End V_\varphi)^{(j)}$, and identifying $(\End V_\varphi)^{(j)}$ with $V_\varphi$, we have $T_{u, \varphi}^{(j)}(v) = \widehat{u}(\varphi) v$, and therefore $\text{Spec}(T_{u, \varphi}^{(j)}) = \text{Spec}(\widehat{u}(\varphi))$.\\
We conclude that
\begin{equation*}
\text{Spec}^*\Big( \left. T_u \right|_{U_\varphi} \Big) = \bigsqcup_{j=1}^{d_\varphi} \text{Spec}^*\big(T_{u, \varphi}^{(j)}\big) = \bigsqcup_{j=1}^{d_\varphi} \text{Spec}^*\big(\widehat{u}(\varphi)\big),
\end{equation*}
and we finally obtain
\begin{equation} \label{eq:7}
\text{Spec}^*(\mathcal{B}) = \text{Spec}^*(T_u) = \bigsqcup_{\varphi \in \widehat{G}} \left(\, \bigsqcup_{j=1}^{d_\varphi} \text{Spec}^*\big(\widehat{u}(\varphi)\big) \right).
\end{equation}

\subsection{The Case of a Gelfand Pair}
We shall now focus our attention on the case where $(G, K)$ is a Gelfand pair and prove Theorem \ref{thm:1}.\par

\begin{proof}[Proof of Theorem \ref{thm:1}]
Let $\varphi \in \widehat{G}$ be such that $\widehat{u}(\varphi) \neq 0$. Since $(G, K)$ is a Gelfand pair, by Theorem 9 in \cite{Diaconis}, there is a basis $B$ of the representation space of $\varphi$ such that
\begin{equation*}
\Big[\widehat{\xi}(\varphi)\Big]_B = \text{diag}(b, 0, ..., 0)
\end{equation*}
for all bi-$K$-invariant functions $\xi$ on $G$, where, clearly, $b = \text{tr} \left( \widehat{\xi}(\varphi) \right)$.\\
Since $u$ is bi-$K$-invariant, it readily follows that
\begin{equation*}
\text{Spec}^*\big(\widehat{u}(\varphi)\big) = \{\text{tr} \left( \widehat{u}(\varphi) \right),  0, ..., 0 \}.
\end{equation*}
We can rewrite $\text{tr} \left( \widehat{u}(\varphi) \right)$ as follows:
\begin{equation*}
\begin{split}
\text{tr} \left( \widehat{u}(\varphi) \right) & = \text{tr} \int\limits_G u(x) \varphi(x^{-1}) \, d\mu(x) = \int\limits_G u(x) \chi_\varphi(x^{-1}) \, d\mu(x) \\
 & = \int\limits_G u(x) \overline{\chi_\varphi(x)} \, d\mu(x) = \left< u, \chi_\varphi \right>
\end{split}
\end{equation*}
where the character of $\varphi$ satisfies the identity $\chi_\varphi(x^{-1}) = \overline{\chi_\varphi(x)}$ since $\varphi$ is unitary. It follows that
\begin{equation} \label{eq:8}
\text{Spec}^*\big(\widehat{u}(\varphi)\big) = \{\left< u, \chi_\varphi \right>,  0, ..., 0 \}.
\end{equation}
Note that this is trivially also true when $\widehat{u}(\varphi) = 0$, and hence (\ref{eq:8}) is true for every $\varphi \in \widehat{G}$. By (\ref{eq:7}) it follows that
\begin{equation*}
\begin{split}
\text{Spec}^*(\mathcal{B}) &= \bigsqcup_{\varphi \in \widehat{G}} \left(\, \bigsqcup_{j=1}^{d_\varphi} \text{Spec}^*\big(\widehat{u}(\varphi)\big) \right) = \bigsqcup_{\varphi \in \widehat{G}} \left(\, \bigsqcup_{j=1}^{d_\varphi} \{\left< u, \chi_\varphi \right>,  0, ..., 0 \} \right) \\
 &= \bigsqcup_{\varphi \in \widehat{G}} \left(\, \bigsqcup_{j=1}^{d_\varphi} \{\left< u, \chi_\varphi \right> \} \right) \sqcup \left( \bigsqcup_{n = 1}^{\infty} \{0\} \right).
\end{split}
\end{equation*}
This precisely means that
\begin{equation*}
\mathrm{Spec}(\mathcal{B}) = \left\{ \left< u, \chi_\varphi \right> \mathrel{\big|} \varphi \in \widehat{G} \right\} \cup \{0\},
\end{equation*}
with each $\left< u, \chi_\varphi \right>$ appearing with multiplicity $\dim \varphi$, and $0$ having infinite multiplicity.
\end{proof}

\section{Case Study: Orbit POVMs on $S^2$} \label{sec:3}
In this section we restrict attention to the special case where $G = \text{SU}(2)$ and $K \simeq S^1$, and apply Theorem \ref{thm:1} to compute the spectrum of the Berezin transform in this case explicitly.\par
Let $G = \text{SU}(2)$ and fix a maximal torus of $G$,
\begin{equation*}
K = \left\{ k_t := \begin{pmatrix} e^{it} & 0\\0 & e^{-it} \end{pmatrix} \mathrel{\Big|} t \in [0, 2\pi) \right\} \simeq S^1.
\end{equation*}\par
Recall that the irreducible unitary representations of $\text{SU}(2)$ are given by $\widehat{G} = \left\{ \rho_j \mid j \in \frac{1}{2} \mathbb{N} \right\}$ where $\rho_j$ is a representation of dimension $n_j := 2j+1$ (Theorem 5.6.3 in \cite{Kowalski}), and for every such representation $(\rho_j, V_j)$ there is an orthonormal basis $w_j^{(j)}, w_j^{(j-1)}, ..., w_j^{(-j)}$ of $V_j$ consisting of eigenvectors w.r.t. a generator of $K$, $\rho_j(k_t) w_j^{(m)} = e^{2imt} w_j^{(m)}$. The parameter $m$ in $w_j^{(m)}$ is called the weight of the vector.\par
Fix $j \in \frac{1}{2} \mathbb{N}$ and take $v_{j, m} = w_j^{(m)}$ to be the vector of weight $m$. Then indeed
\begin{equation*}
K = \left\{ g \in G \mid \rho_j(g) v_{j, m} = e^{i\theta}v_{j, m} \text{ for some } \theta \in [0,2\pi) \right\}
\end{equation*}
is the "stabilizer" of $v_{j, m}$ and by direct computation or by recalling the Hopf fibration $S^3 \stackrel{S^1}\rightarrow S^2$, we find that the quotient space $\Omega = G/K$ is isomorphic to the sphere $S^2$. We thus have the POVM $dW_{j, m} = {n_j \, P_{j, m} \, d\omega}$ on $S^2$, where $P_{j, m}([g]) = P_{\rho_j(g)v_{j, m}}$ is the orthogonal projection onto $\rho_j(g)v_{j, m}$, with associated Berezin transform $\mathcal{B}_{j, m}$ and the corresponding function
\begin{equation*}
u_{j, m}(g) = n_j \left|\left< \rho_j(g) v_{j, m}, v_{j, m} \right>\right|^2.
\end{equation*}

\subsection{The Spectrum of an Orbit POVM on $S^2$} \label{sec:3.1}
We shall now prove Theorem \ref{thm:2}, which states that the positive eigenvalues of the Berezin transform $\mathcal{B}_{j, m}$ are $\lambda^{(0)}, \lambda^{(1)}, \hdots, \lambda^{(2j)}$, where
\begin{equation*}
\lambda^{(J)} := \frac{(2j)!(2j+1)!}{(2j-J)!(2j+J+1)!} \left( \sum\limits_{z=0}^{j-m} (-1)^z \frac{\binom{2j-J}{z}\binom{J}{j-m-z}^2}{\binom{2j}{j-m}} \right)^2
\end{equation*}
has multiplicity $2J+1$.

\begin{proof}[Proof of Theorem \ref{thm:2}]
We begin by making the following observation: The pair $(G, K)$ is a Gelfand pair. Indeed, one can readily check that for any $x = \begin{pmatrix} \alpha & -\bar{\beta} \\ \beta & \bar{\alpha} \end{pmatrix} \in \text{SU}(2)$, we have
\begin{equation*}
x = \begin{pmatrix} \alpha & -\bar{\beta} \\ \beta & \bar{\alpha} \end{pmatrix} = k_{(\phi-\pi)/2} \begin{pmatrix} \alpha & -\bar{\beta} \\ \beta & \bar{\alpha} \end{pmatrix}^{-1} k_{(\phi+\pi)/2} \in K x^{-1} K,
\end{equation*}
where $\phi = \text{arg}(\alpha)$. Hence, applying Proposition 6.1.3 from \cite{vanDijk} with $\theta(x) = x$, the desired conclusion follows.\par
Therefore, by Theorem \ref{thm:1}, it follows that
\begin{equation*}
\text{Spec}(\mathcal{B}_{j, m}) = \left\{ \left< u_{j, m}, \chi_\varphi \right> \mathrel{\big|} \varphi \in \widehat{G} \right\} \cup \{0\} = \left\{ \left< u_{j, m}, \chi_{\rho_J} \right> \mathrel{\Big|} J \in \frac{1}{2} \mathbb{N} \right\} \cup \{0\},
\end{equation*}
where $\lambda^{(J)} := \left< u_{j, m}, \chi_{\rho_J} \right>$ has multiplicity $\dim \rho_J = 2J+1$.\par
In order to calculate $\lambda^{(J)}$, we first rewrite $u_{j, m}$ as follows:
\begin{equation*}
\begin{split}
u_{j, m}(g) & = n_j \left|\left< \rho_j(g) v_{j, m}, v_{j, m} \right>\right|^2 \\
 & = n_j \left< \rho_j(g) v_{j, m}, v_{j, m} \right> \overline{\left< \rho_j(g) v_{j, m}, v_{j, m} \right>} \\
 & = n_j \left< \rho_j(g) v_{j, m}, v_{j, m} \right> \left< \overline{\rho_j(g)} \, \overline{v_{j, m}}, \overline{v_{j, m}} \right> \\
 & = n_j \left< \rho_j(g) w_j^{(m)}, w_j^{(m)} \right> \left< \overline{\rho_j}(g) \overline{w_j^{(m)}}, \overline{w_j^{(m)}} \right> \\
 & = n_j \left< \rho_j(g) w_j^{(m)}, w_j^{(m)} \right> \left< \rho_j(g) w_j^{(-m)}, w_j^{(-m)} \right> \\
 & = n_j \left< \left( \rho_j(g) \otimes \rho_j(g) \right) \left( w_j^{(m)} \otimes w_j^{(-m)} \right), w_j^{(m)} \otimes w_j^{(-m)} \right>,
\end{split}
\end{equation*}
where the equality $\left< \overline{\rho_j}(g) \overline{w_j^{(m)}}, \overline{w_j^{(m)}} \right> = \left< \rho_j(g) w_j^{(-m)}, w_j^{(-m)} \right>$ follows from the fact that $\overline{w_j^{(m)}}$ is the vector of weight $-m$ for the dual representation $\overline{\rho_j}$:
\begin{equation*}
\overline{\rho_j}(k_t) \overline{w_j^{(m)}} = \overline{\rho_j(k_t) w_j^{(m)}} = \overline{e^{2imt} w_j^{(m)}} = e^{-2imt} \overline{w_j^{(m)}}.
\end{equation*}\par
Let $r_j := \rho_j \otimes \rho_j$ and $y_{j, m} := w_j^{(m)} \otimes w_j^{(-m)}$. Then by our computation, $u_{j, m}(g) = n_j \left< r_j(g) y_{j, m}, y_{j, m} \right>$. By the well-known Clebsch-Gordan formula (Corollary 5.6.2 in \cite{Kowalski}), stating that for all $k \ge l \ge 0$,
\begin{equation*}
\rho_k \otimes \rho_l \simeq \bigoplus_{0 \le i \le l} \rho_{k+l-2i},
\end{equation*}
we conclude that in our setting,
\begin{equation} \label{eq:10}
\rho_j \otimes \rho_j \simeq \bigoplus_{J=0}^{2j} \rho_J.
\end{equation}
In particular, we can write
\begin{equation*}
y_{j, m} = \sum_{J=0}^{2j} y_{j, m}^{(J)}
\end{equation*}
for some $y_{j, m}^{(J)} \in V_J$. Then $y_{j, m}$ has total weight $0$ w.r.t. $r_j$:
\begin{equation*}
\begin{split}
r_j(k_t) y_{j, m} & = \left( \rho_j(k_t) w_j^{(m)} \right) \otimes \left( \rho_j(k_t) w_j^{(-m)} \right) = \left( e^{2imt} w_j^{(m)} \right) \otimes \left( e^{-2imt} w_j^{(-m)} \right) \\
 & = w_j^{(m)} \otimes w_j^{(-m)} = y_{j, m}.
 \end{split}
\end{equation*}
On the other hand, by (\ref{eq:10}),
\begin{equation*}
r_j(k_t) y_{j, m} = \sum_{J=0}^{2j} \rho_J(k_t) y_{j, m}^{(J)},
\end{equation*}
and hence $\rho_J(k_t) y_{j, m}^{(J)} = y_{j, m}^{(J)}$, which means all $y_{j, m}^{(J)}$ have weight $0$ as well, and we conclude that $y_{j, m}^{(J)} = \alpha_{j, m}^{(J)} \, w_J^{(0)}$ for some $\alpha_{j, m}^{(J)} \in \mathbb{C}$. We thus arrive at
\begin{equation*}
y_{j, m} = \sum_{J=0}^{2j} \alpha_{j, m}^{(J)} \, w_J^{(0)}.
\end{equation*}
Again by (\ref{eq:10}), we obtain
\begin{equation*}
\begin{split}
\left< r_j(g) y_{j, m}, y_{j, m} \right> & = \left< \sum_{J=0}^{2j} \alpha_{j, m}^{(J)} \, \rho_J(g) w_J^{(0)}, \sum_{J=0}^{2j} \alpha_{j, m}^{(J)} \, w_J^{(0)} \right> \\
 & = \sum_{J=0}^{2j} \left| \alpha_{j, m}^{(J)} \right|^2 \left< \rho_J(g) w_J^{(0)}, w_J^{(0)} \right>,
\end{split}
\end{equation*}
so that
\begin{equation*}
u_{j, m}(g) = n_j \sum_{J=0}^{2j} \left| \alpha_{j, m}^{(J)} \right|^2 \left< \rho_J(g) w_J^{(0)}, w_J^{(0)} \right>.
\end{equation*}\par
Now fix an irreducible representation $\rho_J$ for some $J \in \frac{1}{2}\mathbb{N}$. Using the basis $w_J^{(J)}, ..., w_J^{(-J)}$ for $V_J$, we conclude that its character is
\begin{equation*}
\chi_{\rho_J}(g) = \sum_{k=-J}^J \left< \rho_J(g) w_J^{(k)}, w_J^{(k)} \right>.
\end{equation*}
Therefore,
\begin{equation*}
\begin{split}
\left< u_{j, m}, \chi_{\rho_J} \right> & = \int\limits_G u_{j, m}(g) \overline{\chi_{\rho_J}(g)} \, d\mu(g) \\
 & = \sum_{k=-J}^J \sum_{J'=0}^{2j} n_j \left| \alpha_{j, m}^{(J')} \right|^2 \int\limits_G \left< \rho_{J'}(g) w_{J'}^{(0)}, w_{J'}^{(0)} \right> \overline{\left< \rho_J(g) w_J^{(k)}, w_J^{(k)} \right>} \, d\mu(g)
\end{split}
\end{equation*}
By Schur's orthogonality relations for matrix coefficients (Lemma 5.5.2 in \cite{Kowalski}),
\begin{equation*}
\int\limits_G \left< \rho_{J'}(g) w_{J'}^{(0)}, w_{J'}^{(0)} \right> \overline{\left< \rho_J(g) w_J^{(k)}, w_J^{(k)} \right>} \, d\mu(g)
\end{equation*}
vanishes for $J' \neq J$ and is equal to
\begin{equation*}
\frac{\left< w_J^{(0)}, w_J^{(k)} \right> \overline{\left< w_J^{(k)}, w_J^{(0)} \right>}}{\dim{V_J}} = \frac{\left< w_J^{(0)}, w_J^{(k)} \right>^2}{\dim{V_J}}
\end{equation*}
for $J'= J$. We conclude that $\left< u_{j, m}, \chi_{\rho_J} \right> = 0$ for $J \notin \{0, 1, 2, ..., 2j\}$, while for $J \in \{0, 1, 2, ..., 2j\}$ we have
\begin{equation*}
\begin{split}
\left< u_{j, m}, \chi_{\rho_J} \right> & = \sum_{k=-J}^J n_j \left| \alpha_{j, m}^{(J)} \right|^2 \int\limits_G \left< \rho_J(g) w_J^{(0)}, w_J^{(0)} \right> \overline{\left< \rho_J(g) w_J^{(k)}, w_J^{(k)} \right>} \, d\mu(g) \\
 & = \sum_{k=-J}^J n_j \left| \alpha_{j, m}^{(J)} \right|^2 \frac{\left< w_J^{(0)}, w_J^{(k)} \right>^2}{\dim{V_J}} = n_j \left| \alpha_{j, m}^{(J)} \right|^2 \frac{1}{\dim{V_J}}.
\end{split}
\end{equation*}
Finally, note that $\alpha_{j, m}^{(J)}$ is the Clebsch-Gordan coefficient $\left< j,m; j,-m \mid J,0 \right>$, and hence for $0 \le J \le 2j$ we find that
\begin{equation} \label{eq:11}
\begin{split}
\lambda^{(J)} & = \left< u_{j,m}, \chi_{\rho_J} \right> = \frac{2j+1}{2J+1} \left| \alpha_{j,m}^{(J)} \right|^2 = \frac{2j+1}{2J+1} \left| \left< j,m; j,-m \mid J,0 \right> \right|^2 \\
 & = \frac{2j+1}{2J+1} \cdot \frac{(2J+1)(2j-J)!J!^2}{(2j+J+1)!} \cdot (j+m)!^2 (j-m)!^2 J!^2 \\
 & \cdot \left( \sum_z \frac{(-1)^z}{z!(2j-J-z)!(j-m-z)!^2(J-j+m-z)!^2} \right)^2 \\
 & = \frac{(2j)!(2j+1)!}{(2j-J)!(2j+1+J)!} \left( \sum_{z=0}^{j-m} (-1)^z \frac{\binom{2j-J}{z}\binom{J}{j-m-z}^2}{\binom{2j}{j-m}} \right)^2,
\end{split}
\end{equation}
where the value of $\left< j,m; j,-m \mid J,0 \right>$ follows from the general formula for the Clebsch-Gordan coefficients \cite[p.~172]{Bohm}.
\end{proof}

In particular, we have
\begin{equation} \label{eq:12}
\begin{split}
\lambda^{(1)} & = \frac{(2j)!(2j+1)!}{(2j-1)!(2j+2)!} \left( \sum_{z=0}^{j-m} (-1)^z \frac{\binom{2j-1}{z}\binom{1}{j-m-z}^2}{\binom{2j}{j-m}} \right)^2 \\
 & = \frac{2j}{2j+2} \left( \frac{\binom{2j-1}{j-m} - \binom{2j-1}{j-m-1}}{\binom{2j}{j-m}} \right)^2 = \frac{j}{j+1} \left( \frac{(j+m)-(j-m)}{2j} \right)^2 \\
 & = \frac{m^2}{j(j+1)}.
\end{split}
\end{equation}

\subsection{Highest Weight Vector}
We first consider the case $m = j$ when $v_{j,m} = w_j^{(j)}$ is the highest weight vector.
Then (\ref{eq:11}) simplifies to
\begin{equation*}
\lambda^{(J)} = \frac{(2j)!(2j+1)!}{(2j-J)!(2j+1+J)!} = \frac{\binom{4j+1}{2j-J}}{\binom{4j+1}{2j}}
\end{equation*}
From the last expression it readily follows that in this case,
\begin{equation*}
1 = \lambda^{(0)} > \lambda^{(1)} > \hdots > \lambda^{(2j)} > 0,
\end{equation*}
and, in particular,
\begin{equation*}
\gamma(\mathcal{B}_{j,j}) = 1 - \lambda^{(1)} = 1 - \frac{j}{j+1} = \frac{1}{j+1} = \frac{1}{j} - \frac{1}{j(j+1)}.
\end{equation*}
Hence Conjectures \ref{conj:1} and \ref{conj:2} hold in this case (of $d = 0$).

\subsection{Second-Highest Weight} \label{sec:3.3}
Now we consider the case $m = j-1$ when $v_{j,m} = w_j^{(j-1)}$ is the second-highest weight vector. We shall prove Conjectures \ref{conj:1} and \ref{conj:2} in this case (of $d = 1$) as well.
\begin{proof}[Proof of Theorem \ref{thm:3}]
First, note that (\ref{eq:11}) simplifies to
\begin{equation*}
\begin{split}
\lambda^{(J)} & = \frac{(2j)!(2j+1)!}{(2j-J)!(2j+1+J)!} \left( \frac{J^2}{2j} - \frac{2j-J}{2j} \right)^2 \\
 & = (2j+1)(2j-1)!^2 \frac{(J^2+J-2j)^2}{(2j-J)!(2j+1+J)!}.
\end{split}
\end{equation*}
We begin by proving the first part of the theorem. We have, for $1 \le J \le 2j$,
\begin{equation*}
\frac{\lambda^{(J)}}{\lambda^{(J-1)}} = \frac{(J^2+J-2j)^2 (2j+1-J)}{(J^2-J-2j)^2 (2j+1+J)} = 1 - \frac{2J(J^4-(8j+1)J^2+12j^2+4j)}{(J^2-J-2j)^2 (2j+1+J)}
\end{equation*}
The sign of
\begin{equation*}
\frac{2J(J^4-(8j+1)J^2+12j^2+4j)}{(J^2-J-2j)^2 (2j+1+J)}
\end{equation*}
equals that of $J^4-(8j+1)J^2+12j^2+4j$, and we are thus led to investigate the domains of positivity and negativity of the latter. We have
\begin{equation*}
J^4-(8j+1)J^2+12j^2+4j = (J^2 - \kappa_-)(J^2 - \kappa_+),
\end{equation*}
where
\begin{equation*}
\kappa_\pm = \frac{8j+1 \pm \sqrt{16j^2+1}}{2}.
\end{equation*}
We conclude that $J^4-(8j+1)J^2+12j^2+4j$ is negative for $\kappa_- < J^2 < \kappa_+$, i.e. for $2j+1 \le J^2 \le 6j$ (since $J^2$ is integer), and positive for $J^2 \le 2j$ or $J^2 \ge 6j+1$.\\
It follows that $\lambda^{(J)} < \lambda^{(J-1)}$ for $J^2 \le 2j$, $\lambda^{(J)} > \lambda^{(J-1)}$ for $2j+1 \le J^2 \le 6j$ and then again $\lambda^{(J)} < \lambda^{(J-1)}$ for $J^2 \ge 6j+1$, which can be summarized as
\begin{equation*}
\lambda^{(0)} > \lambda^{(1)} > \hdots > \lambda^{(\lfloor\sqrt{2j}\rfloor)} < \hdots < \lambda^{(\lfloor\sqrt{6j}\rfloor)} > \hdots > \lambda^{(2j)},
\end{equation*}
in agreement with Figure \ref{fig:1a} and proving Conjecture \ref{conj:2} for $d = 1$.\par
We proceed to prove Conjecture \ref{conj:1}. From what we have shown already, we have
\begin{equation*}
\max(\lambda^{(1)}, \hdots, \lambda^{(2j)}) = \max(\lambda^{(1)}, \lambda^{(\lfloor\sqrt{6j}\rfloor)}),
\end{equation*}
and hence it is only left to prove that $\lambda^{(1)} > \lambda^{(\lfloor\sqrt{6j}\rfloor)}$.\\
First, we can estimate $\lambda^{(1)}$ from below for all $j \ge \frac{27}{2}$ as follows,
\begin{equation*}
\lambda^{(1)} = \frac{(j-1)^2}{j(j+1)} \ge \frac{\left(\nicefrac{25}{2}\right)^2}{\nicefrac{27}{2}\cdot\nicefrac{29}{2}} = \frac{625}{783},
\end{equation*}
since the expression for $\lambda^{(1)}$ is clearly monotone increasing in $j$.\\
We proceed to estimate $\lambda^{(\lfloor\sqrt{6j}\rfloor)}$ from above. We have
\begin{equation*}
\begin{split}
\lambda^{(\lfloor\sqrt{6j}\rfloor)} & = \frac{(2j) \cdot\hdots\cdot (2j+1-\lfloor\sqrt{6j}\rfloor)}{(2j+1+\lfloor\sqrt{6j}\rfloor) \cdot\hdots\cdot (2j+2)} \cdot \left(\frac{\lfloor\sqrt{6j}\rfloor^2+\lfloor\sqrt{6j}\rfloor-2j}{2j}\right)^2 \\
 & \le \left(1-\frac{\lfloor\sqrt{6j}\rfloor+1}{2j+1+\lfloor\sqrt{6j}\rfloor}\right) \cdot\hdots\cdot \left(1-\frac{\lfloor\sqrt{6j}\rfloor+1}{2j+2}\right) \cdot \left(\frac{4j+\sqrt{6j}}{2j}\right)^2 \\
 & \le \left(1-\frac{\lfloor\sqrt{6j}\rfloor+1}{2j+1+\lfloor\sqrt{6j}\rfloor}\right)^{\lfloor\sqrt{6j}\rfloor} \left(2+\sqrt{\frac{3}{2j}}\right)^2 \\
 & \le \exp\left(-\frac{\lfloor\sqrt{6j}\rfloor(\lfloor\sqrt{6j}\rfloor+1)}{2j+\lfloor\sqrt{6j}\rfloor+1}\right) \left(2+\sqrt{\frac{3}{2j}}\right)^2 \\
 & \le \exp\left(-\frac{(\sqrt{6j}-1)\sqrt{6j}}{2j+\sqrt{6j}+1}\right) \left(2+\sqrt{\frac{3}{2j}}\right)^2 \\
 & = \exp\left(-3\,\frac{1-\nicefrac{1}{\sqrt{6j}}}{1+\nicefrac{\sqrt{3}}{\sqrt{2j}}+\nicefrac{1}{2j}}\right) \left(2+\sqrt{\frac{3}{2j}}\right)^2
\end{split}
\end{equation*}
and notice that the last expression is monotone decreasing in $j$. Hence for all $j \ge \frac{27}{2}$ we have
\begin{equation*}
\begin{split}
\lambda^{(\lfloor\sqrt{6j}\rfloor)} & \le \exp\left(-3\,\frac{1-\nicefrac{1}{\sqrt{6j}}}{1+\nicefrac{\sqrt{3}}{\sqrt{2j}}+\nicefrac{1}{2j}}\right) \left(2+\sqrt{\frac{3}{2j}}\right)^2 \\
 & \le \exp\left(-3\,\frac{1-\nicefrac{1}{9}}{1+\nicefrac{1}{3}+\nicefrac{1}{27}}\right) \left(2+\frac{1}{3}\right)^2 = \frac{49}{9} \, e^{-\frac{72}{37}},
\end{split}
\end{equation*}
We conclude that for $j \ge \frac{27}{2}$,
\begin{equation*}
\lambda^{(\lfloor\sqrt{6j}\rfloor)} < \lambda^{(1)}.
\end{equation*}
Direct inspection shows that this inequality holds for all $\frac{5}{2} \le j < \frac{27}{2}$ as well. Therefore, for $j \ge \frac{5}{2}$,
\begin{equation*}
\lambda^{(1)} > \lambda^{(2)}, \lambda^{(3)}, \hdots, \lambda^{(2j)}
\end{equation*}
and consequently,
\begin{equation*}
\gamma(\mathcal{B}_{j,j-1}) = 1 - \lambda^{(1)} = 1 - \frac{(j-1)^2}{j(j+1)} = \frac{3}{j} - \frac{4}{j(j+1)},
\end{equation*}
proving Conjecture \ref{conj:1} for $d = 1$ and finishing the proof of the theorem.
\end{proof}

\subsection{Lower Weights} \label{sec:3.4}
Finally, we consider the case of unbounded $j - |m|$ and prove Theorem \ref{thm:4}, which asserts that in this case, the sequence of POVMs $\left(W_{j, m}\right)_{j \in \frac{1}{2} \mathbb{N}}$ does not yield a Berezin-Toeplitz quantization.
\begin{proof}[Proof of Theorem \ref{thm:4}]
Let $\left(m_j\right)_{j \in \frac{1}{2} \mathbb{N}}$ be any sequence of weights such that for all $j$, $m_j \in \{ -j, -j+1, \hdots, j \}$, and $(j-|m_j|)_{j \in \frac{1}{2} \mathbb{N}}$ is unbounded. Assume to the contrary that the sequence of maps $Q_\hbar(f) = \int_{S^2} f \, dW_{j, m_j}$ satisfies the properties of a quantization. By the definition of $Q_\hbar$, it is $\text{SU}(2)$-equivariant (as defined in \cite[p.~22]{IKP}). Indeed, for any $g \in \text{SU(2)}$ and $f \in C^\infty(S^2)$, we have the following straightforward calculation:
\begin{equation*}
\begin{split}
(g Q_\hbar)(f) &= \rho_j(g) Q_\hbar(f) \rho_j(g)^{-1} = \rho_j(g) \left( \int_{S^2} f \, dW_{j, m_j} \right) \rho_j(g)^{-1} \\
 &=  \rho_j(g) \left( \int_{S^2} f(x) \, n_j \, P_{\rho_j(x)v_{j,m_j}} \, d\omega(x) \right)  \rho_j(g)^{-1} \\
 &=  \int_{S^2} f(x) \, n_j \left( \rho_j(g) P_{\rho_j(x)v_{j,m_j}} \rho_j(g)^{-1} \right) d\omega(x) \\
 &= \int_{S^2} f(x) \, n_j \, P_{\rho_j(gx)v_{j,m_j}} \, d\omega(x) \\
 &= \int_{S^2} f(g^{-1} x) \, n_j \, P_{\rho_j(x)v_{j,m_j}} \, d\omega(x) \\
 &= \left( \int_{S^2} (g \cdot f) \, dW_{j, m_j} \right) = Q_\hbar (g \cdot f) = (Q_\hbar g)(f).
\end{split}
\end{equation*}\par
Therefore, by Theorem 6.2 of \cite{IKP}, it is equivalent (as defined in Definition 6.1 of \cite{IKP}) to $T_\hbar^{(t)}$ for some $t \ge 0$, where $T_\hbar^{(t)}(f) := T_\hbar(e^{-t\hbar\Delta}f)$. Here $T_\hbar$ is the standard quantization of $S^2$, given by $T_\hbar(f) = \int_{S^2} f \, dW_{j, j}$, where $j$ and $\hbar$ are connected via $\hbar = \frac{1}{2j}$. Explicitly, this means that there is a sequence of unitary operators $U_\hbar$ such that for every $f \in C^\infty(S^2)$ we have
\begin{equation} \label{eq:13}
\|U_\hbar T_\hbar^{(t)}(f) U_\hbar^{-1} - Q_\hbar(f)\|_\text{op} = O(\hbar^2)
\end{equation}\par
In order to exploit equation (\ref{eq:13}) to arrive at a contradiction, it will prove more convenient to work with the Berezin transforms rather than the quantization maps themselves, as we gained understanding of the spectra of the former. We have:
\begin{equation*}
\begin{split}
\mathcal{B}\Big(T_\hbar^{(t)}\Big) &= \frac{1}{n_\hbar} \left(T_\hbar^{(t)}\right)^* \left(T_\hbar^{(t)}\right) = \frac{1}{n_\hbar} \left(T_\hbar \circ e^{-t\hbar\Delta}\right)^* \left(T_\hbar \circ e^{-t\hbar\Delta}\right) \\
 &= \frac{1}{n_\hbar} e^{-t\hbar\Delta} \left(T_\hbar^{(t)}\right)^* \left(T_\hbar^{(t)}\right) e^{-t\hbar\Delta} = e^{-2t\hbar\Delta} \mathcal{B}(T_\hbar) = e^{-2t\hbar\Delta} \mathcal{B}_{j, j},
\end{split}
\end{equation*}
while, by definition,
\begin{equation*}
\mathcal{B}(Q_\hbar) = \mathcal{B}_{j, m_j}.
\end{equation*}\par
In order to make the transition from the quantization maps to the corresponding Berezin transforms, we shall pass from the operator norm to the trace norm. Towards this end, recall the general fact that if $V$ is a vector space of dimension $n$, then for any $A \in \End V$ we have $\|A\|_\text{op} \ge \frac{1}{\sqrt{n}} \|A\|_2$. Then, noting that $\|U_\hbar T_\hbar^{(t)}(f) U_\hbar^{-1}\|_\text{op} = \|T_\hbar^{(t)}(f)\|_\text{op} \le \| e^{-t\hbar\Delta} f \|_\infty \le \| f \|_\infty$ and $\|Q_\hbar(f)\|_\text{op} \le \| f \|_\infty$, we proceed to estimate
\begin{equation*}
\begin{split}
\|U_\hbar T_\hbar^{(t)}(f) U_\hbar^{-1} - Q_\hbar(f)\|_\text{op} & \ge \left| \|U_\hbar T_\hbar^{(t)}(f) U_\hbar^{-1}\|_\text{op} - \|Q_\hbar(f)\|_\text{op} \right| \\
 & = \left| \|T_\hbar^{(t)}(f)\|_\text{op} - \|Q_\hbar(f)\|_\text{op} \right| \\
 & \ge \frac{1}{2 \| f \|_\infty} \left| \|T_\hbar^{(t)}(f)\|_\text{op}^2 - \|Q_\hbar(f)\|_\text{op}^2 \right| \\
 & \ge \frac{1}{2 \| f \|_\infty n_\hbar} \left| \|T_\hbar^{(t)}(f)\|_2^2 - \|Q_\hbar(f)\|_2^2 \right| \\
 & = \frac{1}{2 \| f \|_\infty n_\hbar} \left| \left\langle \left(T_\hbar^{(t)}\right)^* \left(T_\hbar^{(t)}\right) f, f \right\rangle - \left\langle Q_\hbar^* Q_\hbar f, f \right\rangle \right| \\
 & = \frac{1}{2 \| f \|_\infty} \left| \left\langle \mathcal{B}\left(T_\hbar^{(t)}\right) f, f \right\rangle - \left\langle \mathcal{B}\left(Q_\hbar\right) f, f \right\rangle \right| \\
 & = \frac{1}{2 \| f \|_\infty} \left| \left\langle e^{-2t\hbar\Delta} \mathcal{B}_{j, j} f, f \right\rangle - \left\langle \mathcal{B}_{j, m_j} f, f \right\rangle \right|
\end{split}
\end{equation*}\par
Now consider the Laplace-Beltrami operator $\Delta$ on $S^2$ and let $\varphi_1$ be a normalized eigenfunction of $\Delta$ with eigenvalue $\lambda^{(1)}(\Delta) = 2$. We claim that $\varphi_1$ is also an eigenfunction of $\mathcal{B}_{j,m}$ with eigenvalue $\lambda^{(1)}(\mathcal{B}_{j,m})$ for every $m$. The reasoning is as follows. Consider the regular representation $\pi$ of $\text{SO(3)}$ on $L_2(S^2, \omega)$ given by $(\pi(g)f)(x) = f(g^{-1}x)$. By Proposition 6.4.2 in \cite{Kowalski}, it has a decomposition into a direct sum of irreducible representations
\begin{equation*}
L_2(S^2, \omega) = \bigoplus_{k=0}^{\infty} \mathcal{H}_k
\end{equation*}
where $\mathcal{H}_k$ is the space of the irreducible unitary representation $\pi_k$ of $\text{SO}(3)$, of dimension $\dim \mathcal{H}_k = 2k+1$. It is the space of spherical harmonics of degree $k$, i.e. harmonic homogeneous polynomials of degree $k$ in $3$ variables restricted to the sphere $S^2$. This is nothing but the eigenspace of $\Delta$ corresponding to the eigenvalue $\lambda_k(\Delta) = k(k+1)$. Now observe that $\mathcal{B}_{j, m}$ commutes with $\pi$:
\begin{equation*}
\begin{split}
\mathcal{B}_{j,m}(\pi(g)f)(s) &= \int_\Omega u_{j,m}(t^{-1}s) \, (\pi(g)f)(t) \, dt = \int_\Omega u_{j,m}(t^{-1}s) \, f(g^{-1}t) \, dt \\
 &= \int_\Omega u_{j,m}(t^{-1}g^{-1}s) \, f(t) \, dt = \mathcal{B}_{j,m}(f)(g^{-1}s) \\
 &= (\pi(g)\mathcal{B}_{j,m}(f))(s).
\end{split}
\end{equation*}
Hence $\mathcal{B}_{j,m}$ is an intertwiner between every pair of irreducible representations $\mathcal{H}_k$ and $\mathcal{H}_l$. Since $\dim{\mathcal{H}_k} = 2k+1$, all $\mathcal{H}_k$ are distinct, and thus from Schur's lemma it follows that we can write
\begin{equation*}
\mathcal{B}_{j,m} = \sum_{k=0}^{\infty} \mu_k I_{\mathcal{H}_k}
\end{equation*}
for some real constants $\mu_k$. On the other hand, we know that
\begin{equation*}
\mathcal{B}_{j,m} = \sum_{k=0}^{2j} \lambda^{(k)}(\mathcal{B}_{j,m}) I_{E_k}
\end{equation*}
where $E_k$ is the eigenspace of $\mathcal{B}_{j,m}$ corresponding to the eigenvalue $\lambda^{(k)}(\mathcal{B}_{j,m})$, which by Theorem \ref{thm:2} has dimension $2k+1$. From this it readily follows that $E_k = \mathcal{H}_k$ and $\mu_k = \lambda^{(k)}(\mathcal{B}_{j,m})$ for all $0 \le k \le 2j$. Hence we have the decomposition
\begin{equation*}
\mathcal{B}_{j,m} = \sum_{k=0}^{2j} \lambda^{(k)}(\mathcal{B}_{j,m}) I_{\mathcal{H}_k}
\end{equation*}
and our claim follows.
\par
In particular, we conclude that $\mathcal{B}_{j, m_j} \, \varphi_1 = \lambda^{(1)}(\mathcal{B}_{j, m_j}) \varphi_1 = \frac{j}{j+1} \left( \frac{m_j}{j} \right)^2 \varphi_1$ and $\mathcal{B}_{j, j} \, \varphi_1 = \lambda^{(1)}(\mathcal{B}_{j, j}) \varphi_1 = \frac{j}{j+1} \varphi_1$ (cf. (\ref{eq:12})). Then
\begin{equation*}
e^{-2t\hbar\Delta} \mathcal{B}_{j, j} \, \varphi_1 = e^{-4t\hbar} \frac{j}{j+1} \varphi_1,
\end{equation*}
and it follows that
\begin{equation*}
\begin{split}
O(\hbar^2) &= 2 \| \varphi_1 \|_\infty \|U_\hbar T_\hbar^{(t)}(\varphi_1) U_\hbar^{-1} - Q_\hbar(\varphi_1)\|_\text{op} \\
 & \ge \left| \left\langle e^{-2t\hbar\Delta} \mathcal{B}_{j, j} \, \varphi_1, \varphi_1 \right\rangle - \left\langle \mathcal{B}_{j, m_j} \, \varphi_1, \varphi_1 \right\rangle \right| \\
 & = \left| \left\langle e^{-4t\hbar} \frac{j}{j+1} \varphi_1, \varphi_1 \right\rangle - \left\langle \frac{j}{j+1} \left( \frac{m_j}{j} \right)^2 \varphi_1, \varphi_1 \right\rangle \right| \\
 & = \frac{j}{j+1} \left| e^{-4t\hbar} - \left( \frac{m_j}{j} \right)^2 \right| \\
 &= \frac{j}{j+1} \left| \left( 1 - 4t\hbar + O(\hbar^2) \right) - \left( 1 - \frac{j^2 - m_j^2}{j^2} \right) \right| \\
 &= (1-O(\hbar)) \left| \frac{(j-|m_j|)(j+|m_j|)}{j^2} - 4t\hbar + O(\hbar^2) \right|.
\end{split}
\end{equation*}
We conclude that
\begin{equation*}
\left| \frac{(j-|m_j|)(j+|m_j|)}{j^2} - 4t\hbar \right| = O(\hbar^2),
\end{equation*}
and, in particular,
\begin{equation*}
\frac{(j-|m_j|)(j+|m_j|)}{j^2} = O(\hbar).
\end{equation*}
However, we have
\begin{equation*}
1 \le \frac{j+|m_j|}{j} \le 2,
\end{equation*}
so we remain with
\begin{equation*}
\frac{j-|m_j|}{j} = O(\hbar),
\end{equation*}
or in other words,
\begin{equation*}
j-|m_j| = O(1),
\end{equation*}
which is a contradiction to our assumptions.
\end{proof}

\begin{remark}
Consider the complementary case where $j-|m|$ is bounded, and specifically assume that $j-m = d$ is constant. Let $Q_\hbar^{(d)}(f) := \int_{S^2} f \, dW_{j,j-d}$. Then a comparison of the spectra of the Berezin transforms of $Q_\hbar^{(d)}$ and $T_\hbar^{(t)}$ does not lead to a contradiction as in the proof above. Specifically, there exists $t \ge 0$ (in fact, $t = d$) such that for every fixed positive integer $k$, we have
\begin{equation*}
\left| \lambda^{(k)} \bigg( \mathcal{B} \Big( Q_\hbar^{(d)} \Big) \bigg) - \lambda^{(k)} \bigg( \mathcal{B} \Big( T_\hbar^{(t)} \Big) \bigg) \right| = O(\hbar^2)
\end{equation*}
as $\hbar \rightarrow 0$. One can verify this fact by showing by direct computation using (\ref{eq:11}) that both $\lambda^{(k)}$ values above are equal to $1 - k(k+1)(2d+1)\hbar$ up to $O(\hbar^2)$.\\
The question whether $Q_\hbar^{(d)}$ is a quantization of $S^2$ remains open for all $d \geq 1$.
\end{remark}

\section*{Acknowledgments}
This paper is based on my M.Sc. thesis at Tel Aviv University under the supervision of Prof. Leonid Polterovich.\par
I would like to express my deepest gratitude to Prof. Leonid Polterovich, whose experience and guidance have enriched me deeply. His insights and advice were indispensable for this work.\par
I would like to thank Prof. Andre Reznikov for helpful discussions that were crucial for the present work.\par
I would also like to thank Louis Ioos for dedicating his time to review this work and provide helpful comments.\par
Finally, I would like to thank Stéphane Nonnenmacher. The interest in the problem of whether other weight vectors yield a quantization was triggered by a question of his.

\end{document}